\newcommand{\norm}[1]{\left\Vert#1\right\Vert}
\newcommand*{\cadlag}{c\`adl\`ag }
\DeclareMathOperator*{\esssup}{ess\,sup}
\begin{document}
 
\title{Characterization of Fully Coupled FBSDE in Terms of Portfolio Optimization}

\author[a,1,t1,t3]{Samuel Drapeau}
\author[b,2,t4]{Peng Luo}
\author[c,3,t2]{Dewen Xiong}

\address[a]{SAIF/CAFR/CMAR and School of Mathematical Sciences, Shanghai Jiao Tong University, China}
\address[b]{Department of Statistics and Actuarial Sciences, University of Waterloo, Canada}
\address[c]{School of Mathematical Sciences, Shanghai Jiao Tong University, China}
\eMail[1]{sdrapeau@saif.sjtu.edu.cn}
\eMail[2]{peng.luo@uwaterloo.ca}
\eMail[3]{xiongdewen@sjtu.edu.cn}

\myThanks[t1]{Financial support from the National Science Foundation of China, Grant number 11971310.}
\myThanks[t2]{Financial support from the National Science Foundation of China, Grant number 11671257.}
\myThanks[t3]{Financial support from Shanghai Jiao Tong University, Grant ``Assessment of Risk and Uncertainty in Finance'' number AF0710020.}
\myThanks[t4]{Financial support from the Natural Sciences and Engineering Research Council of Canada, Grant RGPIN-2017-04054.}

\abstract{
We provide a verification and characterization result of optimal maximal sub-solutions of BSDEs in terms of fully coupled forward backward stochastic differential equations.
We illustrate the application thereof in utility optimization with random endowment under probability and discounting uncertainty.
We show with explicit examples how to quantify the costs of incompleteness when using utility indifference pricing, as well as a way to find optimal solutions for recursive utilities.
}
\keyWords{Fully Coupled FBSDE, Utility Portfolio Optimization, Random Endowment, Probability and Discounting Uncertainty.}

%%%%%%%% Paper Infos (optional if not given they won't appear.
%%	\ArXiV give the number of the arxiv reference, it provides the link
\keyAMSClassification{60H20 - 93E20 - 91B16 - 91G10} 
%% \keyJELClassification list of JELClassification
%%	

\date{\today}
%\ArXiV{some.number}
%\keyAMSClassification{XYZ, ZYX}
%\keyJELClassification{ABC, CBA}
\maketitle

\section{Introduction}
Our motivation is the study of the classical portfolio optimization as follows:
In a Brownian filtrated probability space, we consider an agent having a random endowment -- or contingent claim -- $F$ delivering at time $T$.
Starting with an initial wealth $x$, she additionally has the opportunity to invest with a strategy $\hat{\pi}$ in a financial market with $n$ stocks $\hat{S}=(S^1,\ldots,S^n)$ resulting in a corresponding wealth process
\begin{equation*}
    X^{\hat{\pi}}_t=x+\int_{0}^{t} \hat{\pi} \cdot \frac{d\hat{S}}{\hat{S}}
\end{equation*}
where $d\hat{S}/\hat{S}:=(dS^1/S^1, \ldots, dS^n/S^n)$.
She intends to choose a strategy $\hat{\pi}^\ast$ as to optimize her utility in the sense that
\begin{equation*}
    U\left( F+X^{\hat{\pi}^\ast}_T \right)\geq U\left( F+X^{\hat{\pi}}_T \right) \quad \text{for all admissible strategies }\hat{\pi}.
\end{equation*}
Hereby, $F\mapsto U(F)$ is a general utility function -- quasi-concave and increasing -- mapping random variables to $[-\infty,\infty)$.\footnote{On the one hand, quasi-concavity reflects the underlying convexity of general preference ordering in terms of diversification, and on the other hand, monotonicity is a consequence of preferences for better outcomes, see \citep{marinacci2011,drapeau2013} for instance.}
For instance $U(Y)=u^{-1}(E[u(Y)])$ where $u:\mathbb{R}\to \mathbb{R}$ is an increasing concave function corresponding to the certainty equivalent of the classical expected utility \`a la \citet{neumann1947} and \citet{savage1972}.
It may however be a more general concave and increasing operator given by non-linear expectations -- solutions of concave backward stochastic differential equations --  introduced by \citet{peng1997}.
In this setting the utility $U(F)$ is given by the value $Y_0$, solution at time $0$ of the concave backward stochastic differential equation
\begin{equation*}
    Y_t = F-\int_{t}^{T}g(Y,Z)ds-\int_{t}^{T}Z \cdot dW
\end{equation*}
for a jointly convex Lipschitz generator $g:\mathbb{R}\times \mathbb{R}^d\to \mathbb{R}$ and $W$ is a $d$-dimensional Brownian motion.
This functional is concave and increasing.
Recently, \citet{heyne2013} introduced the concept of minimal super-solution of convex backward stochastic differential equations -- in this paper maximal sub-solutions of concave backward stochastic differential equations -- to extend the existence domain of classical backward stochastic differential equations for generator having arbitrary growth.
In this context, the utility $U(F)$ is given by the value $Y_0$, maximal sub-solution of the concave backward stochastic differential equation
\begin{equation}\label{eq:max-subsol-int01}
    \begin{cases}
        Y_s & \leq \displaystyle Y_t-\int_{s}^{t}g(Y,Z)du-\int_{s}^{t}Z \cdot dW, \quad 0\leq s\leq t\leq T\\
        \\
        Y_T & = F
    \end{cases}
\end{equation}
This functional $F\mapsto U(F)$ is also concave and increasing and therefore a utility functional.
Furthermore, according to \citet{DrapeauTangpi}, it admits a dual representation
\begin{equation*}
    U(F)=\inf_{b,c}\left( E\left[ D_T^bM_T^{c} F+\int_{0}^{T}D^bM^{c}g^\ast(b,c)ds  \right] \right)
\end{equation*}
where $g^\ast$ is the convex conjugate of the generator $g$, $D^b=\exp(-\int_{}^{} bds)$ is a discounting factor and $M^c:=\exp(-\int_{}^{} c\cdot dW-\int_{}^{} c^2/2 ds)$ is a probability density.
The interpretation of this utility functional is that it assesses probability uncertainty, as for monetary risk measures see \citep{foellmer2002}, as well as discounting uncertainty, as for sub-cash additive functional see \citep{ravanelli2009}.

Assuming $1\leq n\leq d$ and taking the utility $U$ defined as the value at $0$ of the maximal sub-solution of \eqref{eq:max-subsol-int01}, we want to find a strategy $\hat{\pi}^\ast$ maximizing $U(F+X^{\hat{\pi}}_T)$.
Given the corresponding maximal sub-solution $(Y,Z)$ of \eqref{eq:max-subsol-int01} such that $Y_0 = U(F+X^{\hat{\pi}}_T)$, proceeding to the variable change
\begin{equation*}
    \bar{Y} := Y - X^{\hat{\pi}} \quad \text{and} \quad \bar{Z} := Z - \pi   
\end{equation*}
where\footnote{For $z\in\mathbb{R}^d$, we will use the notation $z=(\hat{z},\tilde{z})$ where $\hat{z}$ and $\tilde{z}$ denote the first $d$ and the last $d-n$ components of $z$ and make the convention that $z=(\hat{z},\tilde{z})=\hat{z}$ if $n=d$. } $\pi=(\hat{\pi}, 0)$ leads to the following equivalent formulation in terms of the following forward backward stochastic system
\begin{equation}\label{eq:max-subsol}
     \begin{cases}
         X_T^{\hat{\pi}} & = \displaystyle x + \int_{0}^{t}\hat{\pi}\cdot \hat{\theta} dt + \int_{0}^{t}\hat{\pi} \cdot d\hat{W}  \\
         \bar{Y}_s & \leq \displaystyle Y_t-\int_{s}^{t}\left[g(X^{\hat{\pi}}+\bar{Y},\bar{Z}+\pi) - \hat{\pi}\cdot \hat{\theta}\right]du - \int_{s}^{t}\bar{Z} \cdot dW ,\quad 0\leq s\leq t\leq T\\
         \bar{Y}_T & = F
    \end{cases}
\end{equation}
for some bounded market price of risk $\hat{\theta}$.
Transferring the terminal dependence on the forward part to the generator allows to state the main results of this paper, namely, a verification and characterization of an optimal strategy $\hat{\pi}^\ast$ in terms of the following fully coupled forward backward stochastic differential equation
\begin{equation}\label{eq:mainsystem}
    \begin{cases}
        X_t & = \displaystyle x + \int_{0}^{t}\hat{\pi}(X+\bar{Y},\bar{Z},\hat{V})\cdot \hat{\theta} ds+ \int_{0}^{t} \hat{\pi}(X+\bar{Y},\bar{Z},\hat{V})\cdot d\hat{W}\\
        \bar{Y}_t & = \displaystyle F - \int_{t}^{T}\left[g\left( \bar{Y}+X,\bar{Z}+\pi(X+\bar{Y},\bar{Z},\hat{V})\right) -\hat{\pi}(X+\bar{Y},\bar{Z},\hat{V})\cdot \hat{\theta}\right] ds -\int_{t}^{T}\bar{Z} \cdot dW\\
        U_t & \displaystyle = U_T+\int_{t}^{T} \left(\frac{\hat{V}^2-\tilde{V}^2}{2} + \hat{V}\cdot \hat{\theta}\right) ds +\int_{t}^{T}V\cdot d W\\
        U_T &=\displaystyle \int_{0}^{T} \left(\partial_yg\left( X+\bar{Y}, \bar{Z}+\pi(X+\bar{Y},\bar{Z},\hat{V}) \right)+\frac{\partial_{\tilde{z}}g\left(X+\bar{Y},\bar{Z}+\pi(X+\bar{Y},\bar{Z},\hat{V}) \right)^2}{2}\right)ds\\
            & \displaystyle \quad\quad\quad +\int_{0}^{T}\partial_{\tilde{z}}g\left( X+\bar{Y},\bar{Z}+\pi(X+\bar{Y},\bar{Z},\hat{V})\right)\cdot d\tilde{W}
    \end{cases}
\end{equation}
where
\begin{itemize}
    \item $W=(\hat{W},\tilde{W})$ is a $d$ dimensional Brownian motion whereby $\hat{W}$ and $\tilde{W}$ denote the first $n$ and last $d-n$ components, respectively;
    \item $g$ is a convex generator;
    \item $F$ is a bounded terminal condition.
    \item $\pi(y,z,\hat{v}):=(\hat{\pi}(y,z,\hat{v}),0)$ is a point-wise solution to
        \begin{equation*}
            \partial_{\hat{z}}g(y,z+\pi(y,z,\hat{v}))=\hat{v} + \hat{\theta}
        \end{equation*}
\end{itemize}
and the optimal strategy is given by $\hat{\pi}^\ast = \hat{\pi}(X+\bar{Y},Z,\hat{V})$.

As for maximal sub-solutions of backward stochastic differential equations introduced and studied by \citet{heyne2013,heyne2014}, they can be understood as an extension of backward stochastic differential equations, where equality is dropped in favor of inequality allowing weaker conditions for the generator $g$.
It allows to achieve existence, uniqueness and comparison theorem without growth assumptions on the generator as well as weaker integrability condition on the forward process and terminal condition.
To stress the relation between maximal sub-solutions and solutions of backward stochastic differential equations, maximal sub-solutions can be characterized as maximal viscosity sub-solutions in the Markovian case, see \citep{mainberger2016}.
It also turns out that they are particularly adequate for optimization problem in terms of convexity or duality among others, see \citep{DrapeauTangpi, tangpi2014} and apply to larger class of generators than backward stochastic differential equations does.

\paragraph{Literature Discussion}
Utility optimization problems in continuous time are popular topics in finance.
\citet{karatzas1987} considered the optimization of the expected discounted utility of both consumption and terminal wealth in the complete market where they obtained an optimal consumption and wealth processes explicitly.
Using duality methods, \citet{cvitanic2001} characterized the problem of utility maximization from terminal wealth of an agent with a random endowment process in semi-martingale model for incomplete markets.
Backward stochastic differential equations, introduced in the seminal paper by \citet{peng1990} in the Lipschitz case and \citet{kobylanski2000} for the quadratic one, have revealed to be central in stating and solving problems in finance, see \citet{karoui1997}.
\citet{duffie1992} defined the concept of recursive utility by means of backward stochastic differential equations, generalized in \citet{epstein2002} and \citet{quenez2003}.
Utility optimization characterization in that context has been treated in \citet{quenez2001} in terms of a forward backward system of stochastic differential equations.
Using a martingale argumentation, \citet{imkeller2005} characterized utility maximization by means of quadratic backward stochastic differential equations for small traders in incomplete financial markets with closed constraints.
Following this line with a general utility function, \citet{horst2014} characterized the optimal strategy via a fully-coupled forward backward stochastic differential equation.
With a similar characterization, \citet{santacroce2014} considered the problem with a terminal random liability when the underlying asset price process is a continuous semi-martingale.
\citet{schweizer2007} studied a stochastic control problem arising in utility maximization under probability model uncertainty given by the relative entropy, see also \citet{schied2007}, \citet{matoussi2015}.
Backward stochastic differential equations, can be viewed themselves as generalized utility operators -- so called $g$-expectations introduced by \citet{peng1997} -- which are related to risk measures, \citet{gianinphd2002}, \citet{peng2004}, \citet{gianin2007}.
As in the classical case, maximal sub-solutions of concave backward stochastic differential equations are nonlinear expectations as well.
In this respect, \citet{tangpi2016} consider utility optimization in that framework, providing existence of optimal strategy using duality methods as well as existence of gradients.
However they do not provide a characterization of the optimal solution to which this work is dedicated to.

\paragraph{Discussion of the results and outline of the paper}
The existence and uniqueness of maximal sub-solutions in \citep{heyne2013, heyne2014, mainberger2016} depends foremost on the integrability of the terminal condition $F$, admissibility conditions on the local martingale part, and the properties of the generator -- positive, lower semi-continuous, convex in $z$ and monotone in $y$ or jointly convex in $(y,z)$.
In the present context though, the generator can no longer be positive, even uniformly linearly bounded from below.
Therefore we had to adapt the admissibility conditions, adequate for the optimization problem we are looking at.
Henceforth, we provide existence and uniqueness of maximal sub-solutions under these new admissibility conditions in Section \ref{sec:01}.
We further present there the formulation of the utility maximization problem and the transformation leading the forward backward system \eqref{eq:max-subsol}.
With this result at hand, we can address in Section \ref{sec:02} the characterization in terms of optimization of maximal sub-solutions of the forward backward stochastic differential equation.
Our first main result, Theorem \ref{thm:maintheorem}, provides a verification argument for solutions of coupled forward backward stochastic differential equation in terms of optimal strategy.
The resulting system excerpt an auxiliary backward stochastic differential equation specifying the gradient dynamic.
The second main result, Theorem \ref{thm:maintheorem02}, provides a characterization of optimal strategies in terms of solution of a coupled forward backward stochastic differential equation.
It turns out, that an auxiliary backward stochastic differential equation is necessary in order to specify the gradient of the solution.
These result extends the ones from \citet{horst2014} stated for utility maximization \`a la \citet{savage1972}.
We illustrate the results in Section \ref{sec:03} by considering utility optimization in a financial context with explicit solutions in given examples.
These explicit solutions allow to address for instance the cost of incompleteness in a financial market.
Finally, we address how the result can be applied when considering optimization for recursive utilities \`a la \citet{kreps1978} or for the present case in continuous time \`a la \citet{duffie1992}.
The proof of existence and uniqueness of maximal sub-solutions being using the same techniques as \citep{heyne2013} is postponed in Appendix \ref{sec:app01}.

\subsection{Notations}
Let $T>0$ be a fixed time horizon and $(\Omega,\mathcal{F},\left(\mathcal{F}_t\right)_{t \in \left[ 0,T \right]},P)$ be a filtered probability space, where the filtration $\left( \mathcal{F}_t\right)$ is generated by a $d$-dimensional Brownian motion $W$ and fulfills the usual conditions.
We further assume that $\mathcal{F}=\mathcal{F}_{T}$.
Throughout, we split this $d$ dimensional Brownian motion into two parts $W=(\hat{W},\tilde{W})$ with $\hat{W}=(W^1,\ldots,W^n)$ and $\tilde{W}=(W^{n+1},\ldots, W^d)$ where $1\leq  n\leq d$.
We denote by $L^0$ the set of $\mathcal{F}_T$-measurable random variables identified in the $P$-almost sure sense.
Every inequality between random variables is to be understood in the almost sure sense.
Furthermore as in the introduction, to keep the notational burden as minimal as possible, we do not write the index in $t$ and $\omega$ for the integrands unless necessary.
We furthermore generically use the short writing $\int_{}^{} \cdot $ for the process $t\mapsto \int_0^t \cdot$.
We say that a \cadlag\, process $X$ is integrable if $X_t$ is integrable for every $0\leq t\leq T$.
We use the notations
\begin{itemize}[fullwidth]
    \item $x\cdot y =\sum x_k y_k$, $x^2=x\cdot x$ and $|x|=\sqrt{x\cdot x}$ for $x$ and $y$ in $\mathbb{R}^d$.
    \item $\mathbb{R}^d_+:=\left\{ x \in \mathbb{R}^d \colon x_k\geq 0 \text{ for all }k \right\}$ and $\mathbb{R}^d_{++}:=\left\{ x \in \mathbb{R}^d \colon x_k> 0 \text{ for all }k \right\}$.
    \item for $x$ and $y$ in $\mathbb{R}^d$, let $xy:=(x_1 y_1, \ldots, x_d y_d)$ and $x/y=(x_1/y_1,\ldots,x_d/y_d)$ if $y$ is in $\mathbb{R}^d_{++}$.
    \item for $x \in \mathbb{R}^m$, $y \in \mathbb{R}^n$ and $A\in \mathbb{R}^{m\times n}$
        \begin{equation*}
            x\cdot A \cdot y:=
            \begin{bmatrix}
                x_1 & \ldots & x_m
            \end{bmatrix}
            \begin{bmatrix}
                a_{11} &\ldots & a_{1n}\\
                \vdots & \ddots &\vdots\\
                a_{m1} & \ldots& a_{mn}
            \end{bmatrix}
            \begin{bmatrix}
               y_1\\
               \vdots\\
               y_n
            \end{bmatrix}
        \end{equation*}
    \item $L^0$ and $L^p$ are the set of measurable and $p$-integrable random variables $X$ identified in the $P$-almost sure sense, $1\leq p\leq \infty$.
    \item $\mathcal{S}$ the set of \cadlag adapted processes.
    \item $\mathcal{L}$ the set of $\mathbb{R}^d$-valued predictable processes $Z$ such that $\int_{}^{} Z \cdot dW$ is a local martingale.\footnote{That is $\int_{0}^{T}Z^2 dt<\infty$ $P$-almost surely.}
    \item $\mathcal{H}$ the set of local martingales $\int_{}^{} Z\cdot dW$ for $Z \in \mathcal{L}$.
    \item $\mathcal{L}^p$ the set of those $Z$ in $\mathcal{L}$ such that $\norm{Z}_{\mathcal{L}^p}:=E[(\int_0^T Z^2dt)^{p/2}]^{1/p}<\infty$, $1\leq p<\infty$.
    \item $\mathcal{H}^p$ the set of martingales $\int_{}^{} Z \cdot dW$ for $Z \in \mathcal{L}^p$.
    \item $bmo$ the set of those $Z$ in $\mathcal{L}$ such that $\int_{}^{} Z\cdot dW$ is a bounded mean oscillations martingale.
        That is, $\|Z\|_{bmo}:=\sup_{\tau} \norm{E[ |  \int_\tau^T Z \cdot dW| |\mathcal{F}_{\tau} ]}_{\infty}<\infty$ where $\tau$ runs over all stopping times.
        Note that according to the \citep{Kazamaki01}, the $bmo_p$ norms are all equivalent for $1\leq p < \infty$ where $\|Z\|_{bmo_p}:=\sup_{\tau} \norm{E[ |  \int_\tau^T Z \cdot dW|^p |\mathcal{F}_{\tau} ]^{1/p}}_{\infty}<\infty$ where $\tau$ runs over all stopping times.
        In particular $\|Z\|_{bmo_2}=\sup_{\tau}\|E[\int_{\tau}^{T}Z^2 ds|\mathcal{F}_{\tau}]^{1/2}\|_{\infty}$.
    \item $BMO$ the set of those $\int_{}^{} Z \cdot W$ such that $Z$ is in $bmo$.
    \item $\mathcal{D}$ the set of those uniformly bounded $b \in \mathcal{L}$.
    \item $M^c$ the stochastic exponential of $c$, that is $M^c=\exp( -\int c\cdot dW-\frac{1}{2}\int c^2 dt)$.
    \item $D^b$ the stochastic discounting of $b$, that is $D^b=\exp(-\int_{}^{} b ds)$.
    \item $M^{bc}=D^bM^c=\exp(-\int_{}^{}(b+c^2/2)dt-\int_{}^{} c\cdot dW)$.
    \item For $c$ in $bmo$ we denote by $P^c$ the equivalent measure to $P$ with density
        \begin{equation*}
            \frac{dP^c}{dP} = \exp\left( -\int_{0}^{T}c \cdot dW - \frac{1}{2}\int_{0}^{T} c^2 dt   \right)
        \end{equation*}
        under which $W^c:=W + \int_{}^{} c dt$ is a Brownian motion.
    \item We generically use the notation $x=(\hat{x},\tilde{x})$ for the decomposition of vectors in $\mathbb{R}^d$ into their $n$ first components and $d-n$ last ones.
        We use the same conventions for the space $\mathcal{L}=(\hat{\mathcal{L}},\tilde{\mathcal{L}})$ where $Z=(\hat{Z},\tilde{Z})\in \mathcal{L}$.
        Also the same for $\mathcal{H}=(\hat{\mathcal{H}},\tilde{\mathcal{H}})$, $\mathcal{H}^p=(\hat{\mathcal{H}}^p,\tilde{\mathcal{H}}^p)$, $bmo=(\hat{bmo},\tilde{bmo})$ and $BMO=(\hat{BMO},\tilde{BMO})$.

        In the case where $n=d$ everything in the following with a $\,\tilde{{}}\,$ disappears or equivalently is set to $0$ and everything with a $\,\hat{{}}\,$ becomes without $\,\hat{{}}$.
\end{itemize}
For a convex function $f:\mathbb{R}^l\to (-\infty, \infty]$, we denote $f^\ast$ its convex conjugate
\begin{equation*}
    f^\ast(y)=\sup\left\{ y \cdot x - f(x)\colon x \in \mathbb{R}^l\right\}, \quad y \in \mathbb{R}^l
\end{equation*}
and denote by $\partial_{x^\ast} f$ the sub-gradients of $f$ at $x^\ast$ in $\mathbb{R}^l$, that is, the set of those $y$ in $\mathbb{R}^l$ such that $f(x)-f(x^\ast)\geq y\cdot (x-x^\ast)$ for all $x$ in $\mathbb{R}^l$.
For any $y$ in $\partial_{x^\ast}f$, it follows from classical convex analysis, see \citep{rockafellar1970}, that
\begin{equation}\label{eq:gradientopti}
    \begin{cases}
        f(x) \geq y\cdot x-f^\ast(y), &\text{for every }x\in \mathbb{R}^l \\
        \\
        f(x^\ast) = y \cdot x^\ast -f^\ast(y).
    \end{cases}
\end{equation}
If the sub-gradient is a singleton -- as in this paper -- it is a gradient and we simplify the notation to $\partial {\color{blue}f}(x^\ast)$.

\section{Maximal Sub-Solutions of FBSDEs and Utility}\label{sec:01}

A function $g:\Omega\times [0,T]\times\mathbb{R}\times \mathbb{R}^{d}\to (-\infty,\infty]$ is called a \emph{generator} if it is jointly measurable, and $g(y,z)$ is progressively measurable for any $(y,z)\in \mathbb{R}\times \mathbb{R}^{d}$.\footnote{To prevent an overload of notations, we do not mention the dependence on $\omega$ and $t$, that is, $g(y,z):=g_t(\omega,y,z)$.}
A generator is said to satisfy condition \textsc{(Std)} if
\begin{enumerate}[label=\textsc{(Std)},leftmargin=40pt]
    \item $(y, z)\mapsto g(y, z)$ is lower semi-continuous, convex with non-empty interior domain and gradients\footnote{Note that we could work with non-empty sub-gradients where by means of \citep[Theorem 14:56]{rockafellar1998} we could apply measurable selection theorem, see \citep[Corollary 1C]{gianin2007} to select measurable gradients in the sub-gradients of $g$ and working with them.} everywhere on its domain (for every $\omega$ and $t$).
        \label{std}
\end{enumerate}
\begin{remark}\label{remark}
    Note that if $g$ satisfies the above assumptions, as a normal integrand, for every $(y_0,z_0)$ in the domain of $g$ and for every $t$ and $\omega$, there exists $b$ and $c$ progressively measurable such that
    \begin{equation*}
        g(y_1,z_1) - g(y_0,z_0):=g_t(\omega, y_1, z_1) - g_t(\omega, y_0, z_0)\geq b_t(\omega) (y_1-y_0)+c_t(\omega) \cdot (z_1- z_0)
    \end{equation*}
    for every $y$, and $z$, see \citet[Chapter 14, Theorem 14.46]{rockafellar1998}.
    These processes $b$ and $c$ are the partial derivatives of $g$ with respect to $y$ and $z$, respectively.
\end{remark}
We further denote by
\begin{equation*}
    \mathcal{P}^g := \left\{ (b,c)\in \mathcal{D}\times bmo \colon E\left[ \int_{0}^{T}M^{bc}g^\ast(b,c)dt \Big | \mathcal{F}_{\cdot} \right]  \in \mathcal{H}^1\right\}.
\end{equation*}
For any \emph{terminal condition} $F$ in $L^{0}$, we call a pair $(Y,Z)$ where $Y \in \mathcal{S}$ and $Z \in \mathcal{L}$ a \emph{sub-solution} of the backward stochastic differential equation if \footnote{Note that the value process $Y$ of a sub-solution is a-priori \cadlag\, hence can jump upwards at time $T$. Therefore, when looking at maximal sub-solutions, considering sub-solutions with random endowment $Y_T\leq F$ as done in \citep{heyne2013} is equivalent to $Y_T= F$, as the latter is larger.}
\begin{equation}\label{eq:03}
    \begin{cases}
        Y_s &\displaystyle \leq Y_t-\int_{s}^{t} g(Y,Z)du-\int_{s}^{t} Z \cdot dW, \quad 0\leq s\leq t\leq T\\
        Y_T &\displaystyle = F
    \end{cases}
\end{equation}

The processes $Y$ and $Z$ are called the \emph{value} and \emph{control} processes, respectively.
Sub-solutions are not unique.
Indeed, $(Y,Z)$ is a sub-solution if and only if there exists an adapted c\`adl\`ag increasing process $K$ with $K_0=0$ such that
\begin{equation*}
    Y_t=F-\int_{t}^{T}g(Y,Z)ds-(K_T-K_t)-\int_{t}^{T}Z\cdot dW
\end{equation*}
which is given by
\begin{equation}\label{eq:increasingprocess}
    K_t=Y_t-Y_0-\int_{0}^{t} g(Y,Z)ds-\int_{0}^{t}Z\cdot dW.
\end{equation}

As mentioned in the introduction, existence and uniqueness of a maximal sub-solution depend foremost on the integrability of the positive part of $F$, admissibility conditions on the local martingale part, and the properties of the generator -- positivity, lower semi-continuity, convexity in $z$ and monotonicity in $y$ or joint convexity in $(y,z)$.
In this paper though, we removed the condition on the generator in terms of positivity to the optimization problem we are looking at.
In order to guarantee the existence and uniqueness of a maximal sub-solution, we need the following admissibility condition.
\begin{definition}
    A sub-solution $(Y,Z)$ to \eqref{eq:03} is called \emph{admissible} if $\int_{}^{} M^{bc}( Z-Yc)\cdot dW$ is in $\mathcal{H}^1$ for every $(b,c)$ in $\mathcal{P}^g$.
\end{definition}
Given a \emph{terminal condition} $F$, we denote by
\begin{equation}\label{eq:acceptance01}
    \mathcal{A}(F):=\mathcal{A}(F,g) =\left\{ (Y,Z)\in \mathcal{S}\times \mathcal{L}\colon (Y,Z)\text{ is an admissible sub-solution of }\eqref{eq:03} \right\}
\end{equation}
the set of admissible sub-solutions of \eqref{eq:03}.
A sub-solution $(Y^\ast, Z^\ast)$ in $\mathcal{A}(F)$ is called \emph{maximal sub-solution} if $Y^\ast_t\geq Y_t$ for every $0\leq t\leq T$, for every other sub-solution $(Y,Z)$ in $\mathcal{A}(F)$.
Our first result concerns the existence and uniqueness of a maximal sub-solution to \eqref{eq:03}.
\begin{theorem}\label{thm:maxsubsol}
    Let $g$ a generator satisfying \ref{std} and a terminal condition $F$ such that $E[M^{bc}_T F|\mathcal{F}_{\cdot}]$ is in $\mathcal{H}^1$ for every $(b,c)$ in $\mathcal{P}^g$.
    If $\mathcal{A}(F)$ is non empty, then there exists a unique maximal sub-solution $(Y^\ast,Z^\ast)$ in $\mathcal{A}(F)$ for which holds
    \begin{equation*}
        Y_t^\ast:=\esssup\left\{ Y_t\colon (Y,Z)\in \mathcal{A}(F) \right\}, \quad 0\leq t\leq T.
    \end{equation*}
\end{theorem}
The proof of the Theorem relies on the same techniques as in \citep{heyne2013} and is postponed into the Appendix \ref{sec:app01}.

As mentioned in the introduction, we present in a financial framework how the maximal sub-solutions are related to the utility formulation problem.
We consider a financial market consisting of one bond with interest rate $0$ and a $n$-dimensional stock price $\hat{S}=(S^1, \ldots, S^n)$ evolving according to
\begin{equation*}
    \frac{d\hat{S}}{\hat{S}}=\hat{\mu} dt +\hat{\sigma} \cdot d\hat{W}, \quad \text{and} \quad \hat{S}_0 \in \mathbb{R}^n_{++}
\end{equation*}
where $d\hat{S}/\hat{S}:=(dS^1/S^1, \ldots, dS^n/S^n)$, $\hat{\mu}$ is a $\mathbb{R}^n$-valued uniformly bounded drift process, and $\hat{\sigma}$ is a $n\times n$ volatility matrix process.
For simplicity, we assume that $\hat{\sigma}$ is invertible such that the market price of risk process
\begin{equation*}
    \hat{\theta}:=\hat{\sigma}^{-1}\cdot \hat{\mu} \quad \text{ is uniformly bounded.}
\end{equation*}
Given a $n$-dimensional \emph{trading strategy} $\hat{\eta}$, the corresponding \emph{wealth process} with initial wealth $x$ satisfies
\begin{equation*}
    X_t  = x + \int_0^t \hat{\eta} \cdot \frac{d\hat{S}}{\hat{S}} = x+\int_{0}^{t}\hat{\eta}\cdot \hat{\sigma} \cdot \hat{\theta} ds+ \int_{0}^{t} \hat{\eta}\cdot \hat{\sigma}\cdot d\hat{W} = x + \int_{0}^{t}\hat{\eta}\cdot \hat{\sigma}\cdot d\hat{W}^{\hat{\theta}}
\end{equation*}
where $\hat{W}^{\hat{\theta}}= \hat{W} + \int_{}^{} \hat{\theta}dt$ which is a Brownian motion under $P^\theta$ where $\theta = (\hat{\theta}, 0)$.
To remove the volatility factor, we generically set $\hat{\pi}=\hat{\eta} \cdot \hat{\sigma}$ and denote by $X^{\hat{\pi}}$ the corresponding wealth process.
\begin{lemma}\label{lem:integrabilityterminal}
    For every terminal condition $F$ in $L^\infty$ and $\hat{\pi}$ in $\hat{bmo}$, it holds that $E[M^{bc}_T(F+X^{\hat{\pi}}_T)|\mathcal{F}_{\cdot}]$ is in $\mathcal{H}^1$ where
    \begin{equation*}
        X^{\hat{\pi}}_t = x + \int_{0}^{t}\hat{\pi}\cdot \hat{\theta} ds + \int_{0}^{t}\hat{\pi} \cdot d\hat{W}.
    \end{equation*}
\end{lemma}
\begin{proof}
    Since $c$ is in $bmo$, it follows from reverse H\"{o}lder inequality, see \citep[Theorem 3.1]{Kazamaki01}, that there exists $p>1$ such that $E[(M^c_T)^p]<\infty$.
    Since $F$ is bounded and $\int_{}^{} \hat{\pi}\cdot d\hat{W}$ is a BMO martingale, we only need to show that $\int_{0}^{T}\hat{\pi}\cdot \hat{\theta}ds$ is in $L^q$ for all $q>1$.
    From $\hat{\pi}$ in $bmo$, it follows that $\int\hat\pi dW$ is in $\mathcal{H}^q$, for all $q>1$.
    Since $\hat{\theta}$ is uniformly bounded, for any $q>1$, it holds that
    \begin{equation*}
        E\left[\left(\int_0^T\left|\hat{\pi}\cdot\hat{\theta}\right|ds\right)^q\right] \le CE\left[\left(\int_0^T\left|\hat{\pi}\right|^2ds\right)^{q/2}\right]<\infty
    \end{equation*}
    for some constant $C$.
    Therefore, it follows from Doob's inequality that
    \begin{equation*}
        E\left[\left(\sup_{0\leq t\leq T} \left| E\left[ M^{bc}_T \left(F+X^{\hat{\pi}}_T\right) \big| \mathcal{F}_{t} \right] \right| \right)^p\right] \leq \left(\frac{p}{p-1}\right)^p E\left[\left| M^{bc}_T \left(F+X^{\hat{\pi}}_T\right)\right|^p\right] < \infty.
    \end{equation*}
\end{proof}
Given therefore a terminal condition $F$ in $L^\infty$, for every $\hat{\pi}$ in $\hat{bmo}$, according to Theorem \ref{thm:maxsubsol} together with Lemma \ref{lem:integrabilityterminal}, it follows that if $\mathcal{A}(F+X^{\hat{\pi}}_T)$ is non-empty, then there exists a unique maximal sub-solution to the forward backward stochastic differential equation
\begin{equation}\label{eq:subsolution01}
    \begin{cases}
        X^{\hat{\pi}}_t & = \displaystyle x + \int_{0}^{t}\hat{\pi}\cdot \hat{\theta}ds+ \int_{0}^{t} \hat{\pi}\cdot d\hat{W}\\
        Y_s & \leq \displaystyle Y_t - \int_{s}^{t}g(Y, Z)du -\int_{s}^{t}Z \cdot dW, \quad 0\leq s\leq t\leq T\\
        Y_T & = F+X_T^{\hat{\pi}}
    \end{cases}
\end{equation}
We denote by $U(F+X^{\hat{\pi}})$ the value of this maximal sub-solution at time $0$, and convene that if $\mathcal{A}(F+X^{\hat{\pi}})$ is empty, then $U(F)=-\infty$.
It follows from the same argumentation as in \citep{heyne2013, mainberger2016, DrapeauTangpi}, that $U$ is a concave increasing functional and therefore a utility operator\footnote{Furthermore, if $g$ is increasing in $y$, then it satisfies the sub-cash additivity property, namely $U(F-m)\geq U(F)-m$ for every $m\geq 0$. A property introduced and discussed in \citep{ravanelli2009}.}.
\begin{remark}\label{rem:expected01}
    It is known, see \citep[Example 2.1]{tangpi2016}, that -- under some adequate smoothness conditions -- the certainty equivalent $U(F)=u^{-1}(E[u(F)])$ can be described as the value at $0$ of the maximal sub-solution of the backward stochastic differential equation
    \begin{equation*}
        \begin{cases}
            Y_s\leq \displaystyle Y_t-\int_{s}^{t}\left(-\frac{u^{\prime\prime}(Y)}{2u^{\prime}(Y)}\right)Z^2du-\int_{s}^{t}Z\cdot dW, \quad 0\leq s\leq t\leq T\\
            Y_T = F
        \end{cases}
    \end{equation*}
    where $(y,z)\mapsto g(y,z)=-(u^{\prime \prime}(y) z^2)/(2u^{\prime}(y))$ is a positive jointly convex generator in many of the classical cases.
    For instance, for $u(x)=\exp(-x)$, $g(y,z)=z^2/2$, and for $u(x)=x^{r}$ with $r\in (0,1)$ and $x>0$, it follows that $g(y,z)=(1-r)z^2/(2y)$ for $(y,z)\in (0,\infty)\times \mathbb{R}^d$.
\end{remark}
Before we proceed to characterization of optimal strategies, let us point to a simple transformation that underlies the following section.
For $(Y,Z)$ sub-solution in $\mathcal{A}(F+X_T^{\hat{\pi}})$, the variable change $\bar{Y} := Y - X^{\hat{\pi}}$, $\bar{Z} := Z-\pi$ where $\pi=(\hat{\pi},0)$ leads to the following system of forward backward stochastic differential equation
\begin{equation}\label{eq:subsolution02}
    \begin{cases}
        X_t & = \displaystyle x + \int_{0}^{t} \hat{\pi}\cdot \hat{\theta}ds+\int_{0}^{t}\hat{\pi}\cdot d\hat{W}\\
        \bar{Y}_s & \leq \displaystyle \bar{Y}_t - \int_{s}^{t}\left[g(\bar{Y}+X, \bar{Z} + \pi) - \pi\cdot \theta\right] du -\int_{s}^{t}\bar{Z} \cdot dW, \quad 0\leq s\leq t\leq T\\
        \bar{Y}_T & = F
    \end{cases}
\end{equation}
where $\theta = (\hat{\theta}, 0)$.
In the following we consistently use the notation $\bar{Y} = Y-X$ and $\bar{Z}=Z-\pi$ where $(Y,Z)$ is sub-solution of the utility problem.

\section{Sufficient Characterization of the Coupled FBSDE System}\label{sec:02}

We are interested in a utility maximization problem with random endowment $F$ in $L^\infty$, for the utility function $U$.
In other terms, finding $\hat{\pi}^\ast$ in $\hat{bmo}$ such that
\begin{equation}\label{eq:problem}
    U(F+X^{\hat{\pi}^\ast}_T) \geq U\left( F+X^{\hat{\pi}}_T \right) \quad \text{for all trading strategy }\hat{\pi} \in \hat{bmo}.
\end{equation}
%For $\hat{\pi}$ in ${\color{blue}\hat{bmo}}$, we denote by $\mathcal{E}(\hat{\pi})$ the value at $0$ of the maximal sub-solution in the set $\mathcal{A}^{\hat{\pi}}$.
%If $\mathcal{A}^{\hat{\pi}}$ is empty, we set $\mathcal{E}(\hat{\pi})=-\infty$.
%From \citep{heyne2013, mainberger2016, DrapeauTangpi} we know that $\mathcal{E}$ is a concave functional on $\hat{bmo}$.
%{\color{blue}As justified above}, we are interested in $\hat{\pi}^\ast$ in $\hat{bmo}$ such that
%\begin{equation}\label{eq:problem}
%    \mathcal{E}(\hat{\pi}^\ast)\geq \mathcal{E}(\hat{\pi}), \quad\text{for all }\hat{\pi} \in \hat{bmo}
%\end{equation}
We call such a strategy $\hat{\pi}^\ast$ an \emph{optimal strategy} to problem \eqref{eq:problem}.
Throughout, we call any trading strategy $\hat{\pi}$ in $\hat{bmo}$ an \emph{admissible strategy}.
We split this Section into two, namely a verification result and a characterization result in the spirit of \citep{horst2014} which has been done in the context of classical expected utility optimization.

\subsection{Verification}
Our first main result is a verification theorem for the optimal solution given by the solution of a fully coupled backward stochastic differential equation.

\begin{theorem}\label{thm:maintheorem}
    Suppose that there exists $\eta(y,z,\hat{v}):=(\hat{\eta}(y,z,\hat{v}),0)$ such that
    \begin{equation}\label{eq:sufficientcond}
        \partial_{\hat{z}} g(y, z+\eta(y,z,\hat{v}))=\hat{v}+\hat{\theta}
    \end{equation}
    Suppose that the fully coupled forward backward system of stochastic differential equations
    \begin{equation}\label{eq:sufficient}
        \begin{cases}
            X_t & \displaystyle = x+\int_0^t \hat{\eta}(X+\bar{Y},\bar{Z},\hat{V})\cdot \hat{\theta}ds+\int_{0}^{t} \hat{\eta}(X+\bar{Y},\bar{Z},\hat{V})\cdot d\hat{W}\\
            \bar{Y}_t & \displaystyle = F-\int_{t}^{T}\left\{ g\left(X+\bar{Y},\bar{Z}+\eta\left( X+\bar{Y},\bar{Z},\hat{V} \right)\right)-\hat{\eta}(X+\bar{Y},\bar{Z},\hat{V})\cdot \hat{\theta}\right\}ds-\int_{t}^{T}\bar{Z} \cdot d W\\
            U_t & \displaystyle = U_T+\int_{t}^{T} \left(\frac{\hat{V}^2-\tilde{V}^2}{2}+\hat{V}\cdot \hat{\theta}\right) ds +\int_{t}^{T}V\cdot d W\\
            U_T &=\displaystyle \int_{0}^{T} \left(\partial_yg\left( X+\bar{Y}, \bar{Z}+\eta(X+\bar{Y},\bar{Z},\hat{V}) \right)+\frac{\partial_{\tilde{z}}g\left( X+\bar{Y},\bar{Z}+\eta(X+\bar{Y},\bar{Z},\hat{V})\right)^2}{2}\right)ds\\
                & \displaystyle \quad\quad\quad +\int_{0}^{T}\partial_{\tilde{z}}g\left( X+\bar{Y},\bar{Z}+\eta(X+\bar{Y},\bar{Z},\hat{V}) \right)\cdot d\tilde{W}
        \end{cases}
    \end{equation}
    admits a solution $(X^\ast,\bar{Y}^\ast,\bar{Z}^\ast,U,V)$ such that
    \begin{itemize}
        \item $\hat{\pi}^\ast:=\hat{\eta}(X^\ast+\bar{Y}^\ast,\bar{Z}^\ast,\hat{V})$ is in $\hat{bmo}$;
        \item $(Y^\ast, Z^\ast):=(\bar{Y}^\ast + X^\ast, \bar{Z}^\ast+\pi^\ast)$ satisfies $\int M^{bc}\left( Z^\ast - Y^\ast c \right)\cdot dW$ is in $\mathcal{H}^1$ for every $(b,c)$ in $\mathcal{P}^g$.
        \item $(b^\ast, c^\ast)$ is in $\mathcal{P}^g$ where $b^\ast:=\partial_yg(Y^\ast,Z^\ast)$ and $c^\ast :=\partial_{z} g(Y^\ast,Z^\ast)$;
    \end{itemize}
    Then, $\hat{\pi}^\ast$ is an optimal strategy to problem \eqref{eq:problem} and
    \begin{equation*}
        U(F+X_T^{\hat{\pi}^\ast})=E\left[ M_T^{b^\ast c^\ast}\left( F+X_T^{\hat{\pi}^\ast} \right) + \int_{0}^{T}M^{b^\ast c^\ast} g^\ast(b^\ast,c^\ast)dt  \right] = \bar{Y}^\ast_0+x.
    \end{equation*}
\end{theorem}
\begin{remark}
    The conditions on the gradient \eqref{eq:sufficientcond} together with the auxiliary BSDE in $(U,V)$ guarantees that the measure with density $M^{b^\ast c^\ast}_T$ is orthogonal to the linear space $\{X^{\hat{\pi}}_T:\hat{\pi} \in \hat{bmo}\}$ of the wealth processes generated by the strategies $\hat{\pi}$ in $\hat{bmo}$.
    Indeed, the auxiliary BSDE in $(U,V)$ is related to an orthogonal projection in terms of measure.
\end{remark}
Before addressing the proof of the theorem, let us show the following lemma concerning the auxiliary BSDE in $(U,V)$ characterizing the gradient of the optimal solution.
\begin{lemma}\label{lem:linearbsde}
    Let $b \in \mathcal{D}$ and $\tilde{c} \in \tilde{bmo}$.
    The backward stochastic differential equation
    \begin{equation*}
        \begin{cases}
            U_t & = \displaystyle U_T+\int_{t}^{T}\left(\frac{\hat{V}^2-\tilde{V}^2}{2}+\hat{V}\cdot \hat{\theta}\right)ds+\int_{t}^{T}V \cdot dW\\
            U_T & = \displaystyle \int_{0}^{T}\left(b +\frac{\tilde{c}^2}{2} \right)dt+\int_{0}^{T}\tilde{c} \cdot d\tilde{W}
        \end{cases}
    \end{equation*}
    admits a unique solution $(U,V)$ with $V$ in $bmo$.
    In this case, if we define $c=(\hat{V}+\hat{\theta},\tilde{c})$ which is in $bmo$, it follows that
    \begin{equation}\label{eq:M_T}
        M_{T}^{bc}=\exp\left( -\int_{0}^{T}\left(\frac{\tilde{V}^2}{2} +\frac{\hat{\theta}^2}{2} \right)dt+\int_{0}^{T}\tilde{V} \cdot d\tilde{W} -\int_0^T\hat{\theta} \cdot d\hat{W} -U_0 \right).
    \end{equation}
\end{lemma}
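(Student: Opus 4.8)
The plan is to strip the equation of its unbounded and indefinite features by one additive shift together with one explicit change of measure, reducing it to a quadratic backward stochastic differential equation with bounded data to which classical theory applies; once a $bmo$ solution is in hand, uniqueness and the identity \eqref{eq:M_T} are soft.

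For existence I would first set $\Pi_t := U_t - \int_0^t (b + \tilde c^2/2)\,ds - \int_0^t \tilde c\cdot d\tilde W$, so that $\Pi_T = 0$. Writing the control of $\Pi$ as $\rho=(\hat\rho,\tilde\rho)$, Itô's formula and the precise shape of the terminal condition turn the lemma's equation into
\[
    \Pi_t = \int_t^T\left(\tfrac12\hat\rho^2 - \tfrac12\tilde\rho^2 - \tilde c\cdot\tilde\rho + b\right)ds - \int_t^T \rho\cdot dW, \qquad \Pi_T = 0,
\]
under the dictionary $\hat V = -\hat\rho$ and $\tilde V = -\tilde c - \tilde\rho$; the reduction is reversible, so it suffices to solve this equation. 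The crucial observation is that every remaining unbounded or indefinite term lives in the $\tilde W$-direction: the terminal value is now bounded, the troublesome $\tilde c^2$ has cancelled, and what survives is the $bmo$-linear cross term $-\tilde c\cdot\tilde\rho$. I would absorb it by passing to the measure $P^{\tilde c}$ under which $\tilde W' := \tilde W + \int \tilde c\,ds$ is a Brownian motion (a genuine density since $\tilde c \in \tilde{bmo}$, \citep{Kazamaki01}), which yields
\[
    \Pi_t = \int_t^T\left(\tfrac12\hat\rho^2 - \tfrac12\tilde\rho^2 + b\right)ds - \int_t^T \rho\cdot dW', \qquad \Pi_T = 0,
\]
driven by the $P^{\tilde c}$-Brownian motion $W' = (\hat W,\tilde W')$. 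This is a quadratic backward stochastic differential equation with bounded terminal condition and a continuous generator that is independent of $\Pi$ and of quadratic growth, $\abs{\tfrac12\hat\rho^2 - \tfrac12\tilde\rho^2 + b}\le \tfrac12\abs{\rho}^2 + \norm{b}_\infty$; the indefinite sign of the quadratic part is immaterial for this growth condition. By \citet{kobylanski2000} it admits a bounded solution with $\rho \in bmo(P^{\tilde c})$, the standard a priori estimate supplying the $bmo$ control. Transferring back via the stability of $bmo$ under a $bmo$-change of measure \citep{Kazamaki01} and undoing the shift produces $(U,V)$ with $\hat V = -\hat\rho$ and $\tilde V = -\tilde c-\tilde\rho$ both in $bmo$; in particular $c=(\hat V,\tilde c)\in bmo$.

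For uniqueness I would linearise. If $(U^1,V^1)$ and $(U^2,V^2)$ are two solutions with $V^i\in bmo$, factoring $(\hat V^1)^2-(\hat V^2)^2=(\hat V^1+\hat V^2)\cdot(\hat V^1-\hat V^2)$ and likewise for the tilde part shows that $\delta U := U^1-U^2$ obeys the linear equation $\delta U_t = \int_t^T \delta V\cdot(\beta\,ds + dW)$ with $\delta U_T=0$, where $\delta V := V^1-V^2$ and $\beta := \big(\tfrac12(\hat V^1+\hat V^2),\,-\tfrac12(\tilde V^1+\tilde V^2)\big)\in bmo$. Passing to the measure with density $M^\beta_T$ — uniformly integrable because $\beta\in bmo$ — turns this into $\delta U_t=\int_t^T \delta V\cdot dW^\beta$ for the associated Brownian motion $W^\beta$; since $bmo$ is preserved under this change of measure \citep{Kazamaki01}, $\int \delta V\cdot dW^\beta$ is a genuine (hence uniformly integrable) martingale, and taking conditional expectations forces $\delta U\equiv 0$ and then $\delta V\equiv 0$. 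This argument uses only $V^i\in bmo$, so it works even though $U$ is unbounded.

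The identity \eqref{eq:M_T} is then purely algebraic: evaluating the first line of the equation at $t=0$, inserting the expression for $U_T$ and splitting $\int_0^T V\cdot dW=\int_0^T \hat V\cdot d\hat W+\int_0^T \tilde V\cdot d\tilde W$, one solves for $-\int_0^T \tfrac{\tilde V^2}{2}\,dt + \int_0^T \tilde V\cdot d\tilde W - U_0$ and recognises it as $-\int_0^T\big(b+\tfrac{\hat V^2+\tilde c^2}{2}\big)dt-\int_0^T \hat V\cdot d\hat W-\int_0^T \tilde c\cdot d\tilde W$, which is exactly $\log M_T^{bc}$ for $c=(\hat V,\tilde c)$; since $b\in\mathcal D$ and $c\in bmo$, the pair $(b,c)$ meets the hypotheses of Lemma \ref{lem:integrability}. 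The one genuinely delicate point is the existence step, namely the recognition that all unbounded and indefinite features are confined to the $\tilde W$-direction and are neutralised by the additive shift together with the single explicit Girsanov transform, after which the residual indefinite quadratic generator is harmless for \citet{kobylanski2000}; the $bmo$ bookkeeping across the two measure changes is the remaining technical burden.
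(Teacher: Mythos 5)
Your proof is correct, and your existence step is essentially the paper's own argument in different bookkeeping: the paper applies Kobylanski directly to the quadratic BSDE $Y_t=\int_0^T b\,ds+\int_t^T\tfrac{1}{2}\bigl(\hat{Z}^2-\tilde{Z}^2\bigr)ds+\int_t^T Z\cdot dW^{\tilde{c}}$ under $P^{\tilde{c}}$, invokes the $bmo$ estimate of Briand--Confortola for $Z$, transfers it back by Kazamaki's invariance, and then performs the change of variables $U=Y+\int\tilde{c}^2/2\,dt+\int\tilde{c}\cdot d\tilde{W}$, $V=(\hat{Z},\tilde{Z}-\tilde{c})$; your $\Pi$ is exactly this $Y$ shifted by $\int_0^t b\,ds$ (so $b$ sits in your generator rather than in the terminal condition), and your dictionary $\hat{V}=-\hat{\rho}$, $\tilde{V}=-\tilde{c}-\tilde{\rho}$ is the same change of variables up to the sign convention on the martingale part. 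Where you genuinely depart from the paper is uniqueness. The paper inherits it from Kobylanski, which strictly speaking gives uniqueness among \emph{bounded} solutions of the transformed equation, while the lemma asserts uniqueness in the class ``$V\in bmo$, $U$ unbounded''; bridging the two requires showing that any $V\in bmo$ solution transforms into a bounded $Y$, which the paper leaves implicit. Your linearization $\delta U_t=\int_t^T\delta V\cdot(\beta\,ds+dW)$ with $\beta\in bmo$, followed by the Girsanov change with density $M^\beta_T$ (uniformly integrable by Kazamaki) and the observation that $\delta V$ stays in $bmo$ under this change — so that $\int\delta V\cdot dW^\beta$ is a true martingale and conditional expectations kill $\delta U$, then quadratic variation kills $\delta V$ — proves uniqueness directly in exactly the class stated, and thus closes that small gap; it is both self-contained and better matched to the statement. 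The remaining parts (the algebra identifying $-\int_0^T\tfrac{\tilde{V}^2}{2}dt+\int_0^T\tilde{V}\cdot d\tilde{W}-U_0$ with $\log M_T^{bc}$ for $c=(\hat{V},\tilde{c})$, and the verification of the hypotheses of Lemma \ref{lem:integrability}) coincide with the paper's; your computation of \eqref{eq:M_T} is in fact cleaner, since the paper's displayed derivation contains sign and factor typos that your version avoids.
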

\begin{proof}
    According to \citet{kobylanski2000}, since $\int_{0}^{T}b ds$ is uniformly bounded, the backward stochastic differential equation
    \begin{equation*}
        Y_t=\int_{0}^Tbds+\int_{t}^T\left(\frac{\hat{Z}^2}{2}-\frac{\tilde{Z}^2}{2}\right)ds+\int_t^T Z \cdot dW^{(\hat{\theta},\tilde{c})}
    \end{equation*}
    admits a unique solution $(Y,Z)$ where $Y$ is uniformly bounded and $Z$ is in $\mathcal{L}^2(P^{(\hat{\theta},\tilde{c})})$.
    According to \citet[Proposition 2.1]{Briand2013} it also holds that $Z$ is in $bmo(P^{(\hat{\theta},\tilde{c})})$ which is also in $bmo$ since $(\hat{\theta},\tilde{c})$ is in $bmo$.\footnote{The $bmo$ space is invariant under $bmo$ measure change, see \citep[Theorem 3.6]{Kazamaki01}.}
    The variable change $U=Y+\int\tilde{c}^2/2dt+\int \tilde{c}\cdot d\tilde{W}$ and $V=(\hat{Z},\tilde{Z}-\tilde{c})$ which is in $bmo$ yields
    \begin{align*}
        U_t&=U_T+\int_t^T\left(\frac{\hat{Z}^2}{2}+\hat{\theta}\cdot \hat{Z}+\tilde{c}\cdot \tilde{Z}-\frac{\tilde{Z}^2}{2}-\frac{\tilde{c}^2}{2}\right)ds+\int_t^T\hat{Z}\cdot d\hat{W}+\int_t^T\left(\tilde{Z}-\tilde{c}\right)\cdot d\tilde{W}\\
           &=U_T+\int_{t}^{T}\left(\frac{\hat{V}^2-\tilde{V}^2}{2}+\hat{V}\cdot \hat{\theta}\right)ds+\int_{t}^{T}V \cdot dW
    \end{align*}
    showing the first assertion.
    Defining now $c=(\hat{V}+\hat{\theta},\tilde{c})$, which is in $bmo$, it follows that
    \begin{multline*}
        -\int_{0}^{T}\left(b+\frac{c^2}{2} \right)dt-\int_{0}^{T}c \cdot dW\\
        =-\int_{0}^{T}\left(b+\frac{\tilde{c}^2}{2}\right) dt-\int_{0}^{T}\tilde{c}\cdot d\tilde{W}-\int_{0}^{T} \frac{\left(\hat{V}+\hat{\theta}\right)^2}{2}dt -\int_{0}^{T}\left(\hat{V}+\hat{\theta}\right)\cdot d\hat{W} \\
        =-U_0+\int_{0}^{T}\left(\frac{\hat{V}^2 -\tilde{V}^2}{2} + \hat{V}\cdot \hat{\theta}\right)ds+\int_{0}^{T}V \cdot dW-\int_{0}^{T} \frac{\left(\hat{V}+\hat{\theta}\right)^2}{2}dt -\int_{0}^{T}\left(\hat{V}+\hat{\theta}\right)\cdot d\hat{W}\\
        =-\int_{0}^{T}\frac{\tilde{V}^2}{2} dt-\int_0^T\frac{\hat{\theta}^2}{2}dt+\int_{0}^{T}\tilde{V}\cdot d\tilde{W}-\int_0^T\hat{\theta}\cdot d\hat{W}-U_0.
    \end{multline*}
    Taking the exponential on both sides, yields \eqref{eq:M_T}.
\end{proof}
With this Lemma at hand, we are in position to address the proof of Theorem \ref{thm:maintheorem}.
\begin{proof}[Theorem \ref{thm:maintheorem}]
    Let $\hat{\pi}^\ast:=\hat{\eta}(X^\ast+\bar{Y}^\ast,\bar{Z}^\ast,\hat{V})$ where $(X^\ast,\bar{Y}^\ast,\bar{Z}^\ast,U,V)$ is a solution of \eqref{eq:sufficient}.
    Let further $Y^\ast = \bar{Y}^\ast + X^\ast$, $Z^\ast = \bar{Z}^\ast + \pi^\ast$ where $\pi^\ast = (\hat{\pi}^\ast, 0)$ and adopt the notations $b^\ast = \partial_y g(Y^\ast, Z^\ast)$, $c^\ast=\partial_{z}g(Y^\ast, Z^\ast)$.
    Since $X^\ast = X^{\hat{\pi}^\ast}$, it follows that $(Y^\ast,Z^\ast)$ satisfies
    \begin{equation*}
        \begin{cases}
            X^{\hat{\pi}^\ast} &= \displaystyle x + \int_{0}^{t}\hat{\pi}^\ast \cdot \hat{\theta}ds + \int_{0}^{t}\hat{\pi}^\ast \cdot d\hat{W}\\
            Y_s^\ast & = \displaystyle Y_t^\ast - \int_{s}^{t}g(Y^\ast, Z^\ast)du -\int_{s}^{t}Z^\ast \cdot dW, \quad 0\leq s\leq t\leq T\\
            Y_T^\ast & = F+X_T^{\hat{\pi}^\ast}
        \end{cases}
    \end{equation*}
    Now by assumption, $\hat{\pi}^\ast$ is in $\hat{bmo}$ and $\int M^{bc}\left( Z^\ast - Y^\ast c \right)\cdot dW$ is in $\mathcal{H}^1$ for every $(b,c)$ in $\mathcal{P}^g$.
    We deduce that $(Y^\ast, Z^\ast)$ is in $\mathcal{A}(F+X^{\hat{\pi}^\ast}_T)$ and therefore $Y^\ast_0\leq U(F+X^{\hat{\pi}^\ast}_T)$.
    Furthermore, since $(b^\ast, c^\ast)= (\partial_yg(Y^\ast, Z^\ast), \partial_z g(Y^\ast, Z^\ast))$, according to \eqref{eq:gradientopti}, it follows that $g(Y^\ast, Z^\ast) = b^\ast Y^\ast + {\color{red}c^*}\cdot Z^\ast - g^\ast(b^\ast, c^\ast)$.
    Hence
    \begin{equation*}
        Y^\ast_t = F+X_T^{\hat{\pi}^\ast} - \int_{t}^{T}\left( b^\ast Y^\ast + c^\ast\cdot Z^\ast -g^\ast(b^\ast, c^\ast)\right)ds - \int_{t}^{T}Z^\ast \cdot dW.
    \end{equation*}
    Since $(b^\ast, c^\ast)$ is in $\mathcal{P}^g$ and $\hat{\pi}^\ast$ in $\hat{bmo}$ we deduce that
    \begin{equation*}
        Y^\ast_0 = E\left[ M^{b^\ast c^\ast}\left( F+X^{\hat{\pi}^\ast} \right) + \int_{0}^{T}M^{b^\ast c^\ast}g^\ast(b^\ast, c^\ast)dt   \right].
    \end{equation*}

    As for the rest of the theorem, since $(Y^\ast, Z^\ast)$ is in $\mathcal{A}(F+X_T^{\hat{\pi}^\ast})$, we are left to show that for any $\hat{\pi}$ in $\hat{bmo}$ and any $(Y,Z)$ in $\mathcal{A}(F+X_T^{\hat{\pi}})$ it follows that $Y^\ast_0\geq Y_0$.
    Indeed, it would follows that
    \begin{itemize}
        \item $Y^\ast_0 \geq Y_0$ for every $(Y,Z)$ in $\mathcal{A}(F+X_T^{\hat{\pi}^\ast})$ and therefore $Y^\ast_0 = U(F+X_T^{\hat{\pi}^\ast})$;
        \item $Y^\ast_0 \geq Y_0$ for every $(Y,Z)$ in $\mathcal{A}(F+X_T^{\hat{\pi}})$ and every $\hat{\pi}$ in $\hat{bmo}$ showing that $Y^\ast_0 \geq U(F + X_T^{\hat{\pi}})$ for every $\hat{\pi}$ in $\hat{bmo}$.
    \end{itemize}
    Let therefore $\hat{\pi}$ in $\hat{bmo}$.
    Without loss of generality we may assume that $\mathcal{A}(F+X_T^{\hat{\pi}})$ is non-empty.
    Let $(Y,Z)$ in $\mathcal{A}(F+X_T^{\hat{\pi}})$ and denote by $\Delta Y:=Y-Y^\ast$, $\Delta Z:=Z-Z^\ast$ and $\Delta \hat{\pi}=\hat{\pi} -\hat{\pi}^\ast$.
    According to Remark \ref{remark}, it follows that $g(Y,Z) - g(Y^\ast,Z^\ast) \geq b^\ast \Delta Y + c^\ast \cdot \Delta Z$.
    Hence
    \begin{align*}
        \Delta Y_s&\leq \Delta Y_t -\int_{s}^{t} \left[g(Y,Z)-g(Y^\ast,Z^\ast)\right] du-\int_{s}^{t}\Delta Z\cdot dW\\
                  &\leq \Delta Y_t-\int_{s}^{t}\left[b^\ast \Delta Y+c^\ast \cdot \Delta Z \right]du-\int_{s}^{t}\Delta Z\cdot dW.
    \end{align*}
    By the change of variable $\check{Y}=M^{b^\ast c^\ast} \Delta Y$ and $\check{Z}=M^{b^\ast c^\ast} (\Delta Z-\Delta Yc^\ast)$, it follows that $(\check{Y},\check{Z})$ satisfies\footnote{Recall, $\hat{W}^{\hat{\theta}} = \hat{W} + \int_{}^{} \hat{\theta} dt$.}
    \begin{equation*}
        \check{Y}_t \leq M^{b^\ast c^\ast}_T\left(X^{\hat{\pi}}_T-X^{\hat{\pi}^\ast}_T\right) -\int_{t}^{T} \check{Z}\cdot dW = E\left[ M^{b^\ast c^\ast}_T\left(\int_{0}^{T} \hat{\pi} \cdot d\hat{W}^{\hat{\theta}} \right) \Big |\mathcal{F}_t \right]
    \end{equation*}
    since $\int \check{Z}\cdot d W$ is a martingale as the difference of two martingales; $(b^\ast, c^\ast)$ being in $\mathcal{P}^g$.
    However, $c^\ast$ and $b^\ast$ satisfying the condition of Lemma \ref{lem:linearbsde}, it follows that $\tilde{V}$ is in particular in $\tilde{bmo}$.
    Hence, $W^{(\hat{\theta}, -\tilde{V})} := (\hat{W} + \int_{}^{} \hat{\theta} dt, \tilde{W} - \int_{}^{} \tilde{V}dt)=(\hat{W}^{\hat{\theta}}, \tilde{W} - \int_{}^{} \tilde{V}dt)$ is a Brownian motion under the measure $P^{(\hat{\theta}, -\tilde{V})}$.
    Since $\check{Y}_0 = \Delta Y_0$, for $t=0$, according to \eqref{eq:M_T}, we have
    \begin{align*}
        \Delta Y_0 = \check{Y}_0 & \leq E\left[ M_T^{b^\ast c^\ast} \int_{0}^{T} \Delta \hat{\pi} \cdot d \hat{W}^{\hat{\theta}}\right]\\
                                 & =E\left[ \exp\left( -\int_{0}^{T}\frac{\tilde{V}^2}{2}dt-\int_0^T\frac{\theta^2}{2}dt+\int_{0}^{T}\tilde{V} \cdot d\tilde{W}-\int_0^T\hat{\theta} \cdot d\hat{W}-U_0   \right) \int_0^T \Delta \hat{\pi} \cdot d \hat{W}^{\hat{\theta}} \right]\\
                                & =\exp(-U_0)E^{(\hat{\theta},-\tilde{V})}\left[\int_0^T \Delta \hat{\pi} \cdot d \hat{W}^{\hat{\theta}}\right]=0.
    \end{align*}
    Thus, $Y^\ast_0\geq Y_0$ which ends the proof.
\end{proof}
\begin{remark}
    Note that the proof of the theorem shows in particular that the maximal sub-solution for the optimal utility is $(Y^\ast, Z^\ast)$ which satisfies a ``linear''\footnote{Naturally, the coefficients $a^\ast$, $b^\ast$ and $c^\ast$ depend on $\hat{\pi}^\ast,X^\ast,\bar{Y}^\ast,\bar{Z}^\ast$, but are actually the gradients evaluated at the value of the optimal solution.} backward stochastic differential equation
    \begin{equation*}
        Y^\ast_t=F+X_T^{\hat{\pi}^\ast}-\int_{t}^{T} \left[b^\ast Y^\ast+c^\ast \cdot Z^\ast-g^\ast(b^\ast, c^\ast)\right]ds-\int_{t}^{T} Z^\ast \cdot dW.
    \end{equation*}
\end{remark}

\begin{remark}
    The case of utility optimization for the certainty equivalent $U(F)=u^{-1}(E[u(F)])$ or its equivalent formulation in terms of expected utility $E[u(F)]$ in a backward stochastic differential equation context has been the subject of several papers, in particular \citep{horst2014} and \citep{tangpi2016}.
    The optimal solutions provided in those papers each correspond to the coupled forward backward stochastic differential equation system of Theorem \eqref{thm:maintheorem}.
    Indeed, as mentioned in Remark \ref{rem:expected01}, the generator $g$ corresponds to $g(y,z)=-(u^{\prime\prime}(y)z^2)/(2u^{\prime}(y))$.
    In that context, the coupled system of forward backward stochastic differential equations in Theorem \ref{thm:maintheorem} corresponds to
\begin{equation*}
    \begin{cases}
        X_t & \displaystyle = x+\int_{0}^{t} \hat{\theta}\cdot \left(-\hat{\bar{Z}}-\frac{u^\prime(X+\bar{Y})}{u^{\prime\prime}(X+\bar{Y})}(\hat{\theta}+\hat{V})\right)ds+\int_{0}^{t} \left(-\hat{\bar{Z}}-\frac{u^\prime(X+\bar{Y})}{u^{\prime\prime}(X+\bar{Y})}(\hat{\theta}+\hat{V})\right) \cdot d\hat{W}\\
        \bar{Y}_t & \displaystyle = F-\int_{t}^{T} \left(-\left(\frac{u^{\prime}(X+\bar{Y})}{2u^{\prime\prime}(X+\bar{Y})}\left(\hat{\theta}+\hat{V}\right)^2+\frac{u^{\prime\prime}(X+\bar{Y})}{2u^{\prime}(X+\bar{Y})}\tilde{\bar{Z}}^2\right)-\hat{\theta}\cdot \left(-\hat{\bar{Z}}-\frac{u^\prime(X+\bar{Y})}{u^{\prime\prime}(X+\bar{Y})}(\hat{\theta}+\hat{V})\right) \right)ds \\
            & \quad \quad \quad \quad-\displaystyle \int_{t}^{T}\bar{Z} \cdot d W\\
        U_t & \displaystyle = U_T+\int_{t}^{T} \left(\frac{\hat{V}^2-\tilde{V}^2}{2}+\hat{\theta}\cdot \hat{V}\right) ds +\int_{t}^{T}V\cdot d W\\
        U_T &=\displaystyle \int_{0}^{T} \left(\partial_yg \left(X+\bar{Y},\bar{Z} +{\eta}(X+\bar{Y},\bar{Z},\hat{V})\right)+\frac{1}{2}\left(\frac{u^{\prime\prime}(X+\bar{Y})}{u^{\prime}(X+\bar{Y})} \tilde{\bar{Z}}\right)^2\right)ds\\
            & \quad\quad\quad\quad \displaystyle +\int_{0}^{T}-\frac{u^{\prime\prime}(X+\bar{Y})}{u^{\prime}(X+\bar{Y})}\tilde{\bar{Z}}\cdot d\tilde{W}
    \end{cases}
\end{equation*}
    It turns out that $\partial_yg \left(x+y,z +{\eta}(x+y,z,\hat{v})\right)=0$, implies in that case that $\hat{v}=0$ and therefore
    \begin{equation*}
        \begin{cases}
            \partial_yg \left(x+y,z +{\eta}(x+y,z,\hat{v})\right)& =-\displaystyle \frac{u^{\prime\prime\prime}(x+y)u^{\prime}(x+y)-\left(u^{\prime\prime}(x+y)\right)^2}{2\left(u^{\prime}(x+y)\right)^2}\left(\frac{\left(u^{\prime}(x+y)\right)^2}{\left(u^{\prime\prime}(x+y)\right)^2}\hat{\theta}^2+\tilde{z}^2\right)\\
                                                                 & = 0\\
            \partial_{\hat{z}}g(x+y,z+\eta(x+y,z,\hat{v}))& =\hat{\theta}
        \end{cases}
    \end{equation*}
    Under these conditions, the forward backward stochastic differential equation turns into
    \begin{equation*}
        \begin{cases}
            X_t & \displaystyle = x+\int_{0}^{t} \hat{\theta}\cdot\left(-\hat{\bar{Z}}-\frac{u^\prime(X+\bar{Y})}{u^{\prime\prime}(X+\bar{Y})}\hat{\theta}\right)ds+\int_{0}^{t} \left(-\hat{\bar{Z}}-\frac{u^\prime(X+\bar{Y})}{u^{\prime\prime}(X+\bar{Y})}\hat{\theta}\right) \cdot d\hat{W}\\
            \bar{Y}_t & \displaystyle = F-\int_{t}^{T} \left(-\left(\frac{u^{\prime}(X+\bar{Y})}{2u^{\prime\prime}(X+\bar{Y})}\hat{\theta}^2+\frac{u^{\prime\prime}(X+\bar{Y})}{2u^{\prime}(X+\bar{Y})}\tilde{\bar{Z}}^2\right)-\hat{\theta}\cdot \left(-\hat{\bar{Z}}-\frac{u^\prime(X+\bar{Y})}{u^{\prime\prime}(X+\bar{Y})}\hat{\theta}\right) \right)ds -\int_{t}^{T}\bar{Z} \cdot d W\\
            U_t & \displaystyle = U_T+\int_{t}^{T} \frac{-\tilde{V}^2}{2} ds +\int_{t}^{T}\tilde{V}\cdot d \tilde{W}\\
            U_T &=\displaystyle \int_{0}^{T} \frac{1}{2}\left(\frac{u^{\prime\prime}(X+\bar{Y})}{u^{\prime}(X+\bar{Y})}\tilde{\bar{Z}}^2\right)ds+\int_{0}^{T}-\frac{u^{\prime\prime}(X+\bar{Y})}{u^{\prime}(X+\bar{Y})}\tilde{\bar{Z}}\cdot d\tilde{W}
        \end{cases}
    \end{equation*}
    which coincide with the forward backward stochastic differential equation system in \citep{horst2014}, noting that the auxiliary backward stochastic differential equation in $(U,V)$ disappears by a transformation.
    For classical utility functions such as exponential with random endowment, and power or logarithmic without endowment, the optimization problem can be solved by solving quadratic backward stochastic differential equations, see \citep{imkeller2005}.
    Their method relies on a ``separation of variables'' property shared by those classical utility functions.
    In the case of exponential utility, as seen in the first case study of Section \ref{sec:03} in the case where $\beta=0$, our forward backward stochastic differential equation system reduces to a simple backward stochastic differential equation system.
\end{remark}

\subsection{Characterization}
Our second main result is a characterization theorem of optimal solutions in terms of the fully coupled system of forward backward stochastic differential equations presented in Theorem \ref{thm:maintheorem}.
\begin{theorem}\label{thm:maintheorem02}
    Suppose that $\hat{\pi}^\ast$ in $\hat{bmo}$ is an optimal strategy to problem \eqref{eq:problem}.
    Denote by $(Y^\ast, Z^\ast)$ the corresponding maximal sub-solution to problem \eqref{eq:problem} for $\hat{\pi}^\ast$ and denote $\bar{Y}^\ast := Y^\ast - X^{\hat{\pi}^\ast}$ as well as $\bar{Z}^\ast := Z^\ast -\hat{\pi}^\ast$.
    Under the assumptions
    \begin{itemize}
        \item the sub-solution $(Y^\ast, Z^\ast)$ is a solution;
        \item the concave function $\mathbb{R} \ni m \mapsto f(m):=U(F+X^{m\hat{\pi}+\hat{\pi}^\ast}_T)-U(F+X^{\hat{\pi}^\ast}_T)$, is differentiable at $0$ for every $\hat{\pi}$ in $\hat{bmo}$.
        \item $(b^\ast, c^\ast)$ is in $\mathcal{P}^g$ where $b^\ast := \partial_y g(Y^\ast, Z^\ast)$ and $c^\ast :=\partial_z g(Y^\ast, Z^\ast)$;
        \item the point-wise implicit solution $\eta(y,z,\hat{v})=(\hat{\eta}(y,z,\hat{v}),0)$ to $\partial_{\hat{z}}g(y, z+\eta(y,z,\hat{v}))=\hat{v}+\hat{\theta}$ is unique for every given $y$, $z$ and $\hat{v}$;
     %   \item $M=E^{\theta}[\frac{M^{b^\ast c^\ast}_T}{M^{\hat{\theta}}_T}|\mathcal{F}_{\cdot}]$ is a strictly positive martingale in $\mathcal{H}^1$
    \end{itemize}
    then it holds that
    \begin{equation*}
        \hat{\pi}^\ast =\hat{\eta}(X^\ast+\bar{Y}^\ast, \bar{Z}^\ast,\hat{V})\quad P\otimes dt\text{-almost surely}
    \end{equation*}
    where $(U,V)$ is the unique solution with $V$ in $bmo$ of
    \begin{equation}\label{eq:auxiliarybsde}
        \begin{cases}
        U_t & \displaystyle = U_T+\int_{t}^{T} \left(\frac{\hat{V}^2-\tilde{V}^2}{2}+\hat{V}\cdot \hat{\theta}\right) ds +\int_{t}^{T}V\cdot d W\\
            U_T &=\displaystyle \int_{0}^{T} \left(b^\ast+\frac{(\tilde{c}^\ast)^2}{2}\right)ds+\int_{0}^{T}\tilde{c}^\ast \cdot d\tilde{W}
        \end{cases}
    \end{equation}
    In particular, the fully coupled forward backward stochastic differential equation system of Theorem \ref{thm:maintheorem} has a solution $(X^{\hat{\pi}^\ast}, \bar{Y}^\ast, \bar{Z}^\ast, U, V)$.
\end{theorem}
\begin{proof}
    Let $\hat{\pi}$ in $\hat{bmo}$.
    By assumption, the function $f$ is concave, admits a maximum at $0$ and is differentiable at $0$.
    In particular, on a neighborhood of $0$, $f$ is real valued.
    For $m$ in such neighborhood, we denote by $(Y^m,Z^m)$ the maximal sub-solution in $\mathcal{A}(F+X_T^{m\hat{\pi}+\hat{\pi}^\ast})$.
    Since $(b^\ast,c^\ast)$ is in $\mathcal{P}^g$, it follows that $\int_{}^{} (M^{b^\ast c^\ast}Z^m-M^{b^\ast c^\ast}Y^mc)\cdot dW$ is a martingale.
    By the same argumentation as in the proof of Theorem \ref{thm:maintheorem}, it holds
    \begin{equation*}
        f(m)=U\left(F+X_T^{m\hat{\pi}+\hat{\pi}^\ast}\right)-U\left(F+X^{\hat{\pi}^\ast}_T\right)\leq m E\left[ M^{b^\ast c^\ast}_T \int_{0}^{T}\hat{\pi}\cdot d\hat{W}^{\theta} \right]
    \end{equation*}
    for every $m$ in a neighborhood of $0$.
    In particular $E[M^{b^\ast c^\ast}_T\int_{0}^{T}\hat{\pi}\cdot d\hat{W}^{\theta}]$ is in the sub-gradient of $f$ at $0$, which is equal to $0$ since $f$ is concave, maximal at $0$ and differentiable at $0$.
    It follows that
    \begin{equation*}
        E\left[ M^{b^\ast c^\ast}_T\int_{0}^{T}\hat{\pi} \cdot d\hat{W}^{\theta}  \right]=0 \quad \text{for all }\hat{\pi}\in \hat{bmo}.
    \end{equation*}
    Since $M=E[M^{b^\ast c^\ast}_T|\mathcal{F}_{\cdot}]$ is a strictly positive martingale in $\mathcal{H}^1$, by martingale representation theorem, it follows that
    \begin{equation*}
        M=M_0+\int_{}^{} M\hat{H} \cdot d\hat{W}+\int_{}^{} M\tilde{H}\cdot d\tilde{W}
    \end{equation*}
    for which, using the same argumentation methods as in the proof of Lemma \ref{lem:linearbsde}, $H$ is in $bmo$.
    Therefore, it holds that
    \begin{align*}
        \frac{M_t}{M_t^{\theta}} &= M_0 + \int_{0}^{t} \frac{M\hat{H}}{M^{\theta}}\cdot d\hat{W} + \int_{0}^{t} \frac{M\tilde{H}}{M^{\theta}}\cdot d\tilde{W} + \int_{0}^{t} \frac{M\hat{\theta}}{M^{\theta}}\cdot d\hat{W} + \int_{0}^{t} \frac{M\hat{\theta}^2}{M^{\theta}}dt+ \int_{0}^{t} \frac{M\hat{\theta}\cdot \hat{H}}{M^{\theta}}dt\\
        &= M_0 + \int_{0}^{t} \frac{M\left(\hat{H}+\hat{\theta}\right)}{M^{\theta}}\cdot d \hat{W}^\theta + \int_{0}^{t} \frac{M\tilde{H}}{M^{\theta}}\cdot d\tilde{W}.
    \end{align*}
    Hence
    \begin{equation*}
        0=E\left[ M_T^{b^\ast c^\ast} \int_{0}^{T}\hat{\pi}\cdot d\hat{W}^{\theta}  \right]=E^{\theta}\left[ \int_{0}^{T}\frac{M\left(\hat{H}+\hat{\theta}\right)}{M^{\theta}}\cdot \hat{\pi}dt  \right] \quad \text{for all }\hat{\pi} \in \hat{bmo}
    \end{equation*}
    showing that $\hat{H}=-\hat{\theta}$, $P\otimes dt$-almost surely and therefore
    \begin{equation*}
        M=M_0 \exp\left(\int_{}^{} \tilde{H}\cdot d\tilde{W}-\frac{1}{2}\int_{}^{} \tilde{H}^2dt -\frac{1}{2}\int_{}^{}\hat{\theta}^2dt-\int_{}^{}\hat{\theta}\cdot d\hat{W}\right).
    \end{equation*}
    Since $M_T^{b^\ast c^\ast}=M_T$, we deduce that
    \begin{multline*}
        -\int_{0}^{T}\left(b^\ast +\frac{(c^\ast)^2}{2}\right)dt-\int_{0}^{T}c^\ast \cdot dW \\
        = \ln(M_0)+ \int_{0}^{T} \tilde{H}\cdot d\tilde{W}-\frac{1}{2}\int_{0}^{T} \tilde{H}^2dt -\frac{1}{2}\int_0^T\hat{\theta}^2dt-\int_0^T\hat{\theta}\cdot d\hat{W}.
    \end{multline*}
    Defining
    \begin{equation*}
        \begin{cases}
            V & = (\hat{c}^\ast-\hat{\theta}, \tilde{H})\\
            U_t & = \displaystyle -\ln(M_0)-\int_{0}^{t} \left( \frac{(\hat{c}^\ast-\hat{\theta})^2}{2}-\frac{(\tilde{H})^2}{2} +\hat{\theta}\cdot \left(\hat{c}^\ast-\hat{\theta}\right) \right)ds-\int_{0}^{t}\left(\hat{c}^\ast-\hat{\theta}\right) \cdot d\hat{W}-\int_{0}^{t}\tilde{H}\cdot d\tilde{W}
        \end{cases}
    \end{equation*}
    shows that $(U,V)$ satisfies the auxiliary backward stochastic differential equation \eqref{eq:auxiliarybsde}, which by means of Lemma \ref{lem:linearbsde} admits a unique solution.
    Hence
    \begin{equation*}
        \hat{\theta}+\hat{V}=\hat{c}^\ast=\partial_{\hat{z}}g\left( X^\ast+Y^\ast, Z^\ast+\hat{\pi}^\ast \right) \quad P\otimes dt\text{-almost surely}
    \end{equation*}
    which by uniqueness of the point-wize solution $\hat{\eta}(y,z,\hat{v})$ implies that $\hat{\pi}^\ast=\hat{\eta}(X^\ast+Y^\ast,Z^\ast, \hat{V})$ $P\otimes dt$-almost surely.
\end{proof}
\begin{remark}
    Existence of optimal strategies $\hat{\pi}^\ast$ such that $U(F+X^{\hat{\pi}^\ast}_T)\geq U(F+X^{\hat{\pi}}_T)$ for every $\hat{\pi}$ in $\hat{bmo}$ are often showed using functional analysis and duality methods, see for instance \citep{walter2001, kramkov1999} for the case of expected utility.
    Present functionals given by maximal sub-solution of BSDEs, due to dual-representations \citep{DrapeauTangpi}, are also adequate to guarantee existence of optimal strategies as shown in \citep{tangpi2016}.
    As for the directional differentiability condition at the optimal solution $\hat{\pi}^\ast$, it is necessary to guarantee the identification of the optimal solution with its point-wise version.
    This condition is usually checked on case by case such as for the certainty equivalent.
\end{remark}

\section{Financial Applications and Examples}\label{sec:03}
In the following, we illustrate the characterization of Theorem \ref{thm:maintheorem} to different case study.
We present explicit solutions for the optimal strategy in the complete and incomplete case for a modified exponential utility maximization and an application of which to illustrate the cost of incompleteness in terms of indifference when facing an incomplete market with respect to a complete one.
We conclude by addressing recursive utility optimization which bears some particularity in terms of the gradient conditions.

\subsection{Illustration: Complete versus Incomplete Market}
The running example we will use is inspired from the dual representation in \citep{DrapeauTangpi} where
\begin{equation*}
    U(F)=\inf_{b \in \mathcal{D}, c \in bmo}\left\{ E\left[ M^{bc}_T F+\int_{0}^{T}M^{bc} g^{\ast}(b,c) ds \right] \right\}, \quad F \in L^\infty
\end{equation*}
According to this dual representation in terms of discounting and probability uncertainty, we consider the simple example where
\begin{equation*}
    g^{\ast}(b,c)=
    \begin{cases}
        \frac{\gamma c^2}{2} & \text{if } b = \beta
        \\
        \infty & \text{otherwise}
    \end{cases}
\end{equation*}
where
\begin{itemize}
    \item $\beta$ is a positive bounded predictable process;
    \item $\gamma$ is also a positive predictable process strictly bounded away from $0$ by a constant;
\end{itemize}
Note that even if we consider a discounting factor $\beta$, there is no uncertainty about him particularly.
This is an example of a sub-cash additive valuation instead of the classical cash-additive one, see \citep{ravanelli2009}.
If $\beta=0$ and $\gamma$ is constant, then we have a classical exponential utility optimization problem.
We therefore have
\begin{equation}\label{eq:runningexample}
    g(y,z)=\beta y+\frac{z^2}{2 \gamma}, \quad (y,z)\in \mathbb{R}\times \mathbb{R}^d
\end{equation}
To simplify the comparison between the complete and incomplete market, we assume that we have a simplified market with $d$ stocks following the dynamic
\begin{equation*}
    \frac{dS}{S}=\mu dt + \sigma\cdot dW
\end{equation*}
where $\sigma=Id(d\times d)$ is the identity.
In other terms the randomness driving stock $i$ is the Brownian motion $i$.
It follows that $\theta = \mu$ which is uniformly bounded.
In the complete case, the agent can invest in all the stocks while in the incomplete case it is limited to the first $n$ stocks.
\paragraph{Complete Market:}\label{ex:01}
With the generator $g$ given as in Equation \eqref{eq:runningexample}, it follows that
\begin{equation*}
   \hat \eta(y,z,v)=\gamma\left( v+\theta \right)-z.
\end{equation*}
In particular, $z+\hat\eta(y+x,z,v)=\gamma\left( v+\theta \right)$.
Therefore, in order to find an optimal solution to the optimization problem, since $\partial_y g = \beta$ which is in $\mathcal{D}$, it is sufficient to solve the following coupled forward backward stochastic differential equation
\begin{equation*}
    \begin{cases}
        X_t & =\displaystyle x+\int_0^t\left(\gamma\left(V +\theta\right)-\bar{Z}\right)\cdot \theta ds+\int_0^t\left(\gamma\left(V +\theta\right)-\bar{Z}\right)\cdot dW,  \\
        \bar{Y}_t & =\displaystyle F-\int_t^T\left(\beta(X+\bar{Y})+\frac{\gamma}{2} \left(V^2 -\theta^2\right)+\bar{Z}\cdot\theta\right)ds-\int_t^T\bar{Z}\cdot dW,\\
        U_t & =\displaystyle \int_{0}^{T}\beta ds+\int_t^T\left(\frac{V^2}{2}+V\cdot\theta\right)ds+\int_t^TV \cdot dW,\\
    \end{cases}
\end{equation*}
with solution $X^\ast,\bar{Y}^\ast,\bar{Z}^\ast,U,V$ satisfying
\begin{itemize}
    \item $\pi^\ast=\gamma\left(V +\theta\right)-\bar{Z}^\ast$ is in $bmo$;
    \item $(b^\ast, c^\ast)$ is in $\mathcal{P}^g$ where $b^\ast = \beta$ and $c^\ast = V+\theta$;
    \item $\int M^{bc}\left(\bar{Z}^\ast+\pi^\ast-\left(X^\ast+\bar{Y}^\ast\right)c^\ast\right)\cdot dW$ is in $\mathcal{H}^1$ for all $(b,c)$ in $\mathcal{P}^g$.
\end{itemize}

One can easily deduce that the last backward stochastic differential equation admits a unique solution with $V$ in $bmo$ due to the assumption on $\beta$, see \citep{imkeller2005}.
To provide an explicit solution,
\begin{itemize}
    \item \textbf{We further assume that }$M^{\theta}_T$ \textbf{is bounded}.
\end{itemize}
\begin{remark}
    This is in particular the case if $(Y,\theta)$ is solution of the following quadratic backward stochastic differential equation
    \begin{equation*}
        Y_t=H +\int_t^T\frac{\theta^2}{2}ds+\int_t^T\theta \cdot dW
    \end{equation*}
    for some $H\in L^{\infty}$.
    Indeed, in that case $M^\Theta_T = \exp(H-Y_0)$ which is bounded.
    If in addition $H=f(W_T)$ where $f:\mathbb{R}^d\rightarrow\mathbb{R}$ is a bounded Lipschitz function, then $\theta$ is bounded, see \cite{Cheridito2014}.
    Conversely, since $\theta$ is bounded, hence in $bmo$, if $M^{\theta}_T$ is bounded, $(\ln M^{\theta},\theta)$ is the unique solution of the following backward stochastic differential equation
    \begin{equation*}
        \ln M^{\theta}_t = \ln M^{\theta}_T + \int_t^T\frac{\theta^2}{2}ds+\int_t^{T}\theta \cdot dW.
    \end{equation*}
\end{remark}
Defining
\begin{equation*}
    X_T:=\frac{1}{D_T^\beta}\left(C+\int_{0}^{T}\frac{D^\beta \gamma (V+\theta)^2}{2}dt +\int_0^T D^\beta\gamma (V+\theta)\cdot dW\right)-F
\end{equation*}
and noting that
\begin{align*}
    \int_{0}^{T}\left(\frac{D^\beta V^2}{2}+D^{\beta}\theta V\right)dt +\int_0^T D^\beta V\cdot dW & = U_0 +\int_0^T\beta D^{\beta} Udt-D^{\beta}_T\int_0^T\beta dt\\
    \int_{0}^{T}\frac{D^\beta \theta^2}{2}dt +\int_0^T D^\beta\theta\cdot dW & = \int_0^T\beta D^{\beta}\ln M^\theta dt -D^{\beta}\ln M^{\theta}_T
\end{align*}
it follows that $X_T$ is bounded and we choose the constant $C$ such that $E^\theta[X_T]=x$.\footnote{That is
    \begin{equation*}
        C=\frac{1}{E^\theta\left[ D_T^{-\beta} \right]}\left(x+E^\theta\left[ F \right] -E^\theta\left[ \frac{1}{D_T^\beta}\left(\int_{0}^{T}\frac{D^\beta \gamma (V+\theta)^2}{2}dt +\int_0^T D^\beta \gamma (V+\theta)\cdot dW\right) \right] \right).
    \end{equation*}
}
Thus, by martingale representation theorem, there exists a predictable process $\Gamma$ in $bmo$ such that $X_T=x+\int_0^T\Gamma \cdot dW^{\theta}$.
Defining
\begin{equation*}
    \begin{cases}
        X^\ast & :=\displaystyle x+\int \theta \cdot \Gamma dt+\int \Gamma \cdot dW\\
        \bar{Y}^\ast & :=\displaystyle \frac{1}{D^\beta}\left(C+\int \frac{D^\beta \gamma(V+\theta)^2}{2}ds +\int D^\beta\gamma(V+\theta)\cdot dW\right)-X^\ast\\
        \bar{Z}^\ast & := \displaystyle \gamma\left(V+\theta\right)-\Gamma
    \end{cases}
\end{equation*}
it follows that $(X^\ast, \bar{Y}^\ast, \bar{Z}^\ast, U, V)$ is solution of the forward backward stochastic differential equation.
We are left to check that this solution satisfies the integrability conditions.
First, $\pi^\ast=\Gamma$ is in $bmo$.
Second, $b^\ast=\beta$ is bounded hence $b \in \mathcal{D}$.
Third, $c^*\in bmo$ and
\begin{equation*}
    g^\ast(b^\ast,c^\ast)=\frac{\gamma(c^\ast)^2}{2}=\frac{\gamma(V+\theta)^2}{2}\geq 0
\end{equation*}
therefore, it holds that
\begin{multline*}
    0\leq E\left[\int_0^TM^{b^\ast c^\ast}g^\ast(b^\ast,c^\ast)dt\right]
    =E\left[\int_0^{T}M^{c^\ast}D^{b^\ast}\frac{\gamma(c^\ast)^2}{2}dt\right]\\
    =E\left[\int_0^T-M^{c^\ast}c^\ast \cdot dW\int_0^T-\frac{\gamma D^{b^\ast}c^\ast}{2}\cdot dW\right]
    =E\left[\left(M^{c^\ast}_T-1\right)\int_0^T-\frac{\gamma D^{b^\ast}c^\ast}{2}\cdot dW\right]<+\infty.
\end{multline*}
Thus $(b^\ast,c^\ast)\in\mathcal{P}^g$.
Finally, in order to show that $\int M^{bc}\left(\bar{Z}^\ast+\pi^\ast-\left(X^\ast+\bar{Y}^\ast\right)c^\ast\right)\cdot dW$ is in $\mathcal{H}^1$ for all $(b,c)$ in $\mathcal{P}^g$, according to Remark \ref{remark_admin},  we only need to check that for every $(b,c)\in\mathcal{P}^g$, $\sup_{0\leq t\leq T}|M^{bc}_t(X^{\ast}_t+\bar{Y}^{\ast}_t)|$ is in $L^1$, which follows directly from a similar technique as in Lemma \ref{lem:integrabilityterminal} noting that
\begin{equation*}
X^\ast + \bar{Y}^\ast=\displaystyle \frac{1}{D^\beta}\left(C+\int \frac{D^\beta \gamma(V+\theta)^2}{2}ds +\int D^\beta\gamma(V+\theta)\cdot dW\right).
\end{equation*}
Thus, $\pi^\ast=\Gamma=\gamma(V+\theta)-Z^\ast$ is an optimal solution to the optimization problem.
\begin{remark}
    In terms of utility optimization, since $U(F+X_T^{\pi})=\bar{Y}^\ast_0+x$, it follows that
    \begin{multline}\label{eq:result1}
        U\left( F+X_{T}^{\pi^\ast}  \right)=\bar{Y}_0^\ast+x=C\\
        =\frac{1}{E^\theta\left[ D_T^{-\beta} \right]}\left(x+E^\theta\left[ F \right] -E^\theta\left[ \frac{1}{D_T^\beta}\left(\int_{0}^{T}\frac{D^\beta \gamma (V+\theta)^2}{2}dt +\int_0^T D^\beta \gamma (V+\theta)\cdot dW\right) \right] \right)
    \end{multline}
\end{remark}
\begin{remark}
    Instead of assuming that $M^{\theta}_T$ is bounded, we can still have an explicit solution if $\beta$ is deterministic similarly as in the incomplete market, in which we will give the detailed method to get the solution.
\end{remark}
\paragraph{Incomplete Market:}\label{ex:02}
Still with the generator $g$ given as in Equation \eqref{eq:runningexample} but now in the incomplete case -- that is $n<d$ and $\hat{\theta}=\hat{\mu}$ -- it follows that
\begin{equation*}
    \hat{\eta}(y,z,\hat{v})=\gamma \left( \hat{v}+\hat{\theta} \right)-\hat{z}.
\end{equation*}
In particular, $\hat{z}+\hat{\eta}(y+x,z,v)=\gamma\left( \hat{v}+\hat{\theta} \right)$.
Here again $\partial_{y}g = \beta$, and since $\partial_{\tilde{z}}g = \tilde{z}/{\gamma}$, in order to find an optimal solution to the optimization problem, it is sufficient to solve the following coupled forward backward stochastic differential equation
\begin{equation*}
    \begin{cases}
        X_t & = \displaystyle x+\int_0^t\hat{\theta}\cdot \left(\gamma\left(\hat{V} +\hat{\theta}\right)-\hat{\bar{Z}}\right) ds+\int_0^t\left(\gamma\left(\hat{V} +\hat{\theta}\right)-\hat{\bar{Z}}\right)\cdot d\hat{W},  \\
        \bar{Y}_t & = \displaystyle F-\int_t^T\left(\beta(X+\bar{Y})+\frac{\gamma}{2} \left(\hat{V}^2 -\hat{\theta}^2\right)+\hat{\theta}\cdot \hat{\bar{Z}}+\frac{\tilde{\bar{Z}}^2}{2\gamma}\right)ds-\int_t^T \bar{Z} \cdot dW,\\
        U_t & =\displaystyle U_T+\int_t^T\left(\frac{\hat{V}^2-\tilde{V}^2}{2}+\hat{\theta}\cdot \hat{V}\right)ds+\int_t^TV \cdot dW,\\
        U_T & = \displaystyle \int_{0}^{T}\left(\beta+\frac{\tilde{\bar{Z}}^2}{2\gamma^2}\right) ds+\int_{0}^{T}\frac{\tilde{\bar{Z}}}{\gamma} \cdot d\tilde{W}
    \end{cases}
\end{equation*}
with solution $X^\ast,\bar{Y}^\ast,\bar{Z}^\ast,U,V$ satisfying
\begin{itemize}
    \item $\hat{\pi}^\ast=\gamma\left(\hat{V} +\hat{\theta}\right)-\hat{\bar{Z}}^\ast$ is in $bmo$;
\item $(b^\ast,c^\ast)\in\mathcal{P}^g$ where $b^\ast = \beta$ and $c^\ast=(\hat{V}+\theta,\tilde{\bar{Z}}^\ast/\gamma)$;
    \item $\int M^{bc}(\bar{Z}^\ast + \pi^\ast-(X^\ast+\bar{Y}^\ast)c^\ast)\cdot dW$ is in $\mathcal{H}^1$ for all $(b,c)\in\mathcal{P}^g$.
\end{itemize}
In order to provide an explicit solution as in the complete market
\begin{itemize}
    \item \textbf{we assume here that} $\beta$ \textbf{is deterministic};
\end{itemize}
First, if we assume a-priori that $\tilde{c}^\ast=\tilde{\bar{Z}}^\ast/\gamma$ is in $\tilde{bmo}$, since $\beta$ is deterministic, the last backward stochastic differential equation admits a unique solution with $V=(0,-\tilde{c})$ in $bmo$.
Indeed, the following quadratic BSDE
    \begin{equation*}
        \begin{cases}
            \bar{\Upsilon}_t & = \displaystyle \bar{\Upsilon}_T-\int_{t}^{T}\left(\hat{\theta}\cdot\hat{\Lambda}+\frac{\tilde{\Lambda}^2}{D^\beta 2 \gamma} \right) ds-\int_{t}^{T} \Lambda \cdot dW  \\
            \bar{\Upsilon}_T & = \displaystyle D_{T}^\beta \left(F+x\right)+\int_{0}^{T}\frac{D^\beta \gamma \hat{\theta}^2}{2}dt
        \end{cases}
    \end{equation*}
    admits a unique solution with $\Lambda$ in $bmo$ since $\bar{\Upsilon}_T$ is bounded, see \citep{imkeller2005}.
    Therefore,
    \begin{equation*}
        \begin{cases}
            \Upsilon_t &= \displaystyle \bar{\Upsilon}_t -\int_{0}^{t}\gamma\hat{\theta}^2\left(D^\beta-D^\beta_T\right)ds- \int_{0}^{t}\gamma \left(D^\beta-D^\beta_T\right) \hat{\theta}\cdot d\hat{W}\\
            \hat{\Gamma}_t &= \hat{\Lambda}_t - \gamma\left(D^\beta_t-D^\beta_T\right)\hat{\theta}_t\\
            \tilde{\Gamma}_t &= \tilde{\Lambda}_t
        \end{cases}
    \end{equation*}
    satisfies the following quadratic BSDE
    \begin{equation*}
        \begin{cases}
            \Upsilon_t & = \displaystyle \Upsilon_T-\int_{t}^{T}\left(\hat{\theta}\cdot \hat{\Gamma}+\frac{\tilde{\Gamma}^2}{D^\beta 2 \gamma}\right) ds-\int_{t}^{T} \Gamma \cdot dW  \\
            \Upsilon_T & = \displaystyle D_{T}^\beta \left(F+x\right)+\int_{0}^{T}\frac{D^\beta \gamma \hat{\theta}^2}{2}dt - \int_{0}^{T} \gamma \left(D^\beta-D^\beta_T\right)\hat{\theta}^2dt - \int_{0}^{T}\gamma\left(D^\beta-D^\beta_T\right) \hat{\theta}\cdot d\hat{W}
        \end{cases}
    \end{equation*}
    with $\Gamma$ in $bmo$.

It follows that the system is solved for
\begin{equation*}
    \begin{cases}
        \hat{\pi}^\ast & = \displaystyle\gamma \hat{\theta} - \hat{\bar{Z}}^\ast=\gamma\hat{\theta}-\frac{\hat{\Gamma}}{D_T^\beta}\\
        \tilde{c} & = \displaystyle \frac{\tilde{Z}^\ast}{\gamma}=-\tilde{V}\\
        \hat{\bar{Z}}^\ast &= \displaystyle \gamma \hat{\theta} -\hat{\pi}^\ast=\frac{\hat{\Gamma}}{D_T^\beta} \\
        \tilde{\bar{Z}}^\ast &= \displaystyle\frac{\tilde{\Gamma}}{D^\beta}\\
        \hat{V} & = \displaystyle 0\\
        X^\ast &= \displaystyle x+ \int_{}^{} \hat{\pi}^\ast\cdot \hat{\theta} dt+ \int_{}^{} \hat{\pi}^\ast \cdot d\hat{W}\\
        \bar{Y}^\ast & = \displaystyle \frac{1}{D^\beta}\left( \Upsilon_0+\int \frac{D^\beta \gamma \hat{\theta}^2}{2}ds+\int \frac{\tilde{\Gamma}^2}{D^\beta 2 \gamma} ds+\int D^\beta \gamma\hat{\theta} \cdot d\hat{W} +\int \tilde{\Gamma} \cdot d\tilde{W}   \right)-X^\ast
    \end{cases}
\end{equation*}
The fact that the conditions of Theorem \ref{thm:maintheorem} are fulfilled follows the same argumentation as in complete market.
\begin{remark}
    Again, in terms of utility optimization, we obtain that
    \begin{equation}\label{eq:result2}
        U\left( F+X_T^{\hat{\pi}^\ast} \right) = \bar{Y}_0^\ast+x=D_T^\beta x+E^{\hat{\theta}}\left[ D_T^\beta F+\int_{0}^{T}\frac{D^\beta\gamma\hat{\theta}^2}{2}dt-\int_{0}^{T}\frac{\tilde{\Gamma}^2}{D^\beta 2\gamma}dt   \right]
    \end{equation}
\end{remark}

\paragraph{The Cost of Incompleteness}
The computation of explicit portfolio optimal strategies allows to address further classical financial problems such as utility indifference pricing.
Given a contingent claim $F$, we are looking at the start wealth $x^\ast$ such that
\begin{equation*}
    U(F)=U\left( F+x^\ast+\int_{0}^{T}\hat{\pi}^\ast \cdot d\hat{W}^{\hat{\theta}}  \right)
\end{equation*}
where $\hat{\pi}^\ast$ is the corresponding optimal strategy.
In other terms, $x^\ast$ represents the value in terms of indifference pricing one is willing to pay to reach the same utility by having access to a financial market.
Since our functional is only upper semi-continuous, and to distinguish between complete and incomplete markets we proceed as follows.
\begin{equation*}
    \begin{split}
        x^\ast & =\inf\left\{ x \in \mathbb{R}\colon \sup_{\pi \in bmo}U\left( F+x+\int_{0}^{T}\pi\cdot dW^\theta  \right)> U(F) \right\}\\
               & =\inf\left\{ x \in \mathbb{R}\colon U\left( F+x+\int_{0}^{T}\pi\cdot dW^\theta  \right)> U(F) \quad\text{for some }\pi \in bmo\right\}\\
        y^\ast & =\inf\left\{ y \in \mathbb{R}\colon \sup_{\hat{\pi}\in \hat{bmo}}U\left(F+y+\int_{0}^{T}\hat{\pi} \cdot d\hat{W}^{\theta}\right)> U(F) \right\}\\
               & =\inf\left\{ y \in \mathbb{R}\colon U\left(F+y+\int_{0}^{T}\hat{\pi} \cdot d\hat{W}^{\theta}\right)> U(F) \quad \text{for some }\hat{\pi} \in \hat{bmo}\right\}
    \end{split}
\end{equation*}
which represents the utility indifference amount of wealth to be indifferent for $F$ in the complete and incomplete case, respectively.
Intuitively, the amount of wealth necessary to reach the same utility level is higher in the incomplete case, that is $x^\ast \leq y^\ast$.
This is indeed the case since $\hat{bmo}$ is a subset of $bmo$.

In the case of the previous example where an explicit solution stays at hand we have the following explicit costs of having a restricted access to the financial market.
Indeed, in the case where $\beta$ is deterministic, according to Equations \ref{eq:result1} and \ref{eq:result2} we obtain
\begin{equation*}
    U\left( F+x^\ast+\int_{0}^{T}\pi^\ast \cdot dW^\theta  \right)=D_T^{\beta}x^\ast+ E^\theta\left[ D_T^\beta F + \int_{0}^{T}\frac{D^\beta \gamma\theta^2}{2}dt  \right].
\end{equation*}
On the other hand, according to \eqref{eq:result2} it holds
\begin{equation*}
    U\left( F+y^\ast+\int_{0}^{T}\hat{\pi}^\ast \cdot d\hat{W}^{\hat{\theta}}  \right)=D_T^\beta y^\ast+E^{\hat{\theta}}\left[ D_T^\beta F+\int_{0}^{T}\frac{D^\beta\gamma\hat{\theta}^2}{2}dt-\int_{0}^{T}\frac{\tilde{\Gamma}^2}{D^\beta 2\gamma}dt   \right].
\end{equation*}
We deduce that
\begin{equation*}
    \begin{cases}
        x^\ast & = \displaystyle \frac{U(F)}{D_T^\beta}-\frac{1}{D_T^{\beta}}E^{\theta}\left[ D_T^\beta F+\int_{0}^{T}\frac{D^\beta\gamma\theta^2}{2}dt  \right]\\
        \\
        y^\ast & = \displaystyle \frac{U(F)}{D_T^\beta}-\frac{1}{D_T^\beta}E^{\hat{\theta}}\left[ D_T^\beta F+\int_{0}^{T}\frac{D^\beta\gamma\hat{\theta}^2}{2}dt   \right]+\frac{1}{D_T^\beta}E^{\hat{\theta}}\left[ \int_{0}^{T}\frac{\tilde{\Gamma}^2}{ D^\beta 2\gamma}dt   \right]
    \end{cases}
\end{equation*}

\subsection{Inter-Temporal Resolution of Uncertainty}
We conclude with a classical utility functional having some interesting particularity in terms of gradient characterization.
To address inter-temporal resolution of uncertainty, \citet{kreps1978} introduced a new class of inter-temporal utilities that weight immediate consumption against later consumptions and random payoffs.
This idea has been extended in particular by \citet{epstein1989} in the discrete case and later on by \citet{duffie1992} in the continuous case in terms of backward stochastic differential equations.
Given a cumulative consumption stream $c$, positive increasing and right continuous function, a commonly used example of inter-temporal generator of a recursive utility is given by
\begin{equation*}
    f(c,y)=\frac{\beta}{\rho}\frac{c^{\rho}-(\alpha y)^{\rho/\alpha}}{(\alpha y)^{\rho/\alpha-1}}
\end{equation*}
where $\rho, \alpha \in (0,1)$ and $\beta \geq 0$.
We refer to \citep{duffie1992} for the interpretation, properties and derivation of this generator and the corresponding constants.
Note that this generator is concave in $y$ if $\rho\leq \alpha \leq 1$, assumption we will keep.
In the classical setting, the generator is represented in terms of utility with a positive sign in the backward stochastic differential equation.
In our context in terms of costs with $0<\rho\leq \alpha\leq 1$, and $\beta \geq 0$ we define
\begin{equation*}
    g(y)=
    \begin{cases}
        \displaystyle \frac{\beta}{\rho}\frac{(\alpha y)^{\rho/\alpha}-c^{\rho}}{(\alpha y)^{\rho/\alpha-1}}=\frac{\beta \alpha}{\rho} \left(y-\gamma y^{1-\rho/\alpha}\right)& \text{if } y\geq 0\\
        \infty & \text{otherwise}
    \end{cases}
\end{equation*}
which is a convex function in $y$ and where $\gamma = c^\rho/\alpha^{\rho/\alpha}$.
In terms of costs, given a deterministic right continuous increasing consumption stream $c$, the agent weight infinitesimally the opportunity to consume today weighted with a parameter $\rho$ with a rest certainty equivalent of consumption tomorrow to the power $\rho/\alpha$ against the cost in terms of certainty equivalent if waiting tomorrow and not consuming.
The recursive utility $U(F)$ with terminal payoff $F$ is given as the maximal sub-solution of
\begin{equation*}
    \begin{cases}
        Y_s &\leq \displaystyle  Y_t-\int_{s}^{t}g(Y)ds-\int_{s}^{t}Z \cdot dW\\
        Y_T &= F
    \end{cases}
\end{equation*}
In this context, given a random payoff $F$, start wealth $x$, and consumption stream $c$, the agent tries to optimize its recursive utility $U(F+X_T^{\hat{\pi}})$ in terms of investment strategy $\hat{\pi}$ against its consumption choice $c$.
For the sake of simplicity we consider the simple case of a complete market.
The particularity of recursive utilities is that the generator usually do not depend on $z$.
It follows that the condition $\partial_{z} g=0=v+\theta$ enforces the condition in terms of auxiliary backward stochastic differential equation
\begin{equation*}
    U_t=\int_{0}^{T}\partial_{y}g(X+\bar{Y})ds-\int_{t}^{T}\theta^2ds-\int_{t}^{T}\theta\cdot dW.
\end{equation*}
Since
\begin{equation*}
    \partial_yg(X+\bar{Y})=\frac{\beta\alpha}{\rho}\left( 1-\gamma\left(1-\frac{\rho}{\alpha} \right)(X+\bar{Y})^{-\rho/\alpha}\right)
\end{equation*}
we can assume that $X+\bar{Y} = \Phi$ where $t\mapsto \Phi(t)$ is a deterministic function.
Then it follows that
\begin{equation*}
    \bar{Y}_t = \displaystyle F-\int_{t}^{T}\left(\frac{\beta \alpha}{\rho} \left( \Phi-\gamma \Phi^{1-\rho/\alpha} \right)-\pi^\ast\cdot\theta\right)ds -\int_{t}^{T}\bar{Z}\cdot  dW
\end{equation*}
showing that
\begin{equation*}
    F = \bar{Y}_0+\int_{0}^{T}\left(\frac{\beta \alpha}{\rho} \left( \Phi-\gamma \Phi^{1-\rho/\alpha} \right)-\pi^\ast\cdot\theta\right) dt +\int_{0}^{T}\bar{Z} \cdot dW.
\end{equation*}
Setting
\begin{equation*}
    \begin{cases}
        \pi^\ast & = -\bar{Z}\\
        X^\ast & = \displaystyle x+\int_{}^{}\pi^*\cdot\theta dt+\int_{}^{} \pi^\ast \cdot dW\\
        \bar{Y}^\ast & =  \displaystyle \bar{Y}_0 +\int \left(\frac{\beta \alpha}{\rho} \left( \Phi-\gamma \Phi^{1-\rho/\alpha} \right)-\pi^\ast\cdot\theta\right)dt +\int \bar{Z}\cdot  dW\\
        \bar{Z}^\ast & = \bar{Z}
    \end{cases}
\end{equation*}
from $X+\bar{Y} = \Phi$, we deduce that if $\Phi$ is solution of the ordinary differential equation
\begin{equation*}
    \begin{cases}
        \Phi^\prime & = \displaystyle \frac{\beta\alpha}{\rho}\left( \Phi-\gamma \Phi^{1-\rho/\alpha} \right)\\
        \Phi(T) & = E^{\theta}[F]+x
    \end{cases}
\end{equation*}
then the system has an optimal solution.

\begin{appendix}
	\section{Existence and uniqueness of maximal sub-solutions}\label{sec:app01}
\begin{proof}[of Theorem \ref{thm:maxsubsol}]
    Throughout this proof, we use the notation
    \begin{equation*}
        \mathcal{A}^0 = \left\{ (Y,Z)\in \mathcal{S}\times \mathcal{L}\colon
            \begin{cases}
                (Y,Z) \text{ satisfies } \eqref{eq:03}\\
                \displaystyle \int_{}^{} M^{bc}(Z -Yc)\cdot dW\text{ is a sub-martingale for every }(b,c) \in \mathcal{P}^g
            \end{cases}
        \right\}
    \end{equation*}
    Recall that $(Y,Z)$ is a sub-solution if and only if there exists an adapted c\`adl\`ag increasing process $K$ with $K_0=0$ such that
    \begin{equation*}
        Y_t=F-\int_{t}^{T}g(Y,Z)ds-(K_T-K_t)-\int_{t}^{T}Z\cdot dW
    \end{equation*}
    which is given by
    \begin{equation}\label{eq:10}
        K_t=Y_t-Y_0-\int_{0}^{t} g(Y,Z)ds-\int_{0}^{t}Z\cdot dW.
    \end{equation}
    We prove the theorem in several steps
    \begin{enumerate}[label=\textit{Step \arabic*:}, fullwidth]
        \item For any $(Y,Z)$ in $\mathcal{A}(F)$ and $(b,c)$ in $\mathcal{P}^g$, defining $\check{Y} = M^{bc}Y+\int_{}^{} M^{bc}g^\ast(b,c)dt$, and $\check{Z} = M^{bc}(Z-Yc)$, it follows that $\sup_{0\leq t\leq T}| \check{Y}_t | \in L^1$.
            Indeed, using Ito's formula, it follows that $(\check{Y}, \check{Z})$ satisfies
            \begin{equation*}
                \begin{cases}
                    \check{Y}_s & \leq \displaystyle \check{Y}_t - \int_{s}^{t}\check{g}(\check{Y}, \check{Z}) du -\int_{s}^{t}\check{Z} \cdot dW  \\
                    \check{Y}_T & = \displaystyle M^{bc}_T F +\int_{0}^{T}M^{bc} g^\ast(b,c)dt
                \end{cases}
            \end{equation*}
            where
            \begin{multline}\label{eq:modifieddriver}
                \check{g}(\check{y}, \check{z}) := M^{bc}\left[g\left(\frac{\check{y} - \int_{}^{}M^{bc}g^\ast(b,c)dt}{M^{bc}}, \frac{\check{z} + \left(\check{y} - \int_{}^{} M^{bc}g^\ast(b,c)dt\right)c}{M^{bc}}\right)\right.\\
                \left. -b\frac{\check{y} - \int_{}^{}M^{bc}g^\ast(b,c)dt}{M^{bc}} -c\cdot \frac{\check{z} + \left(\check{y} - \int_{}^{} M^{bc}g^\ast(b,c)dt\right)c}{M^{bc}}+g^\ast(b,c) \right]\geq 0.
            \end{multline}
            On the one hand, since $\check{g}$ is positive and $\int_{}^{} \check{Z} \cdot dW$ is in $\mathcal{H}^1$, it holds
            \begin{equation*}
                \check{Y}_t \geq \check{Y}_0 + \int_{0}^{t} \check{g}(\check{Y}, \check{Z}) du +\int_{0}^{t}\check{Z}\cdot  dW
                \geq \check{Y}_0  -\sup_{0\leq t\leq T}\left|  \int_{0}^{t}\check{Z}\cdot  dW\right| \in L^1
            \end{equation*}
            On the other hand, once again since $\check{g}$ is positive, by assumption $\int_{}^{} \check{Z}\cdot dW$, $E[M^{bc} F |\mathcal{F}_\cdot]$ as well as $E[ \int_{0}^{T}M^{bc}g^\ast(b,c)dt | \mathcal{F}_{\cdot} ]$ are in $\mathcal{H}^1$, we have
            \begin{align*}
                \check{Y}_t & \leq E\left[ M^{bc}_T F +\int_{0}^{T}M^{bc} g^\ast(b,c)dt -\int_{t}^{T}\check{g}(\check{Y}, \check{Z}) du -\int_{t}^{T}\check{Z} \cdot dW\big | \mathcal{F}_t\right] \\
                          & \leq \sup_{0\leq t \leq T}\left|E\left[ M^{bc}  F \big| \mathcal{F}_t\right]\right|+\sup_{0\leq t\leq T}\left| E\left[\int_{0}^{T}M^{bc}g^\ast(b,c)dt  \big | \mathcal{F}_t\right]\right| \in L^1
            \end{align*}
            showing that $\sup_{0\leq t\leq T}|\check{Y}_t|$ is in $L^1$.
        \item Let $(Y^n,Z^n)$ be a sequence in $\mathcal{A}(F)$, $\tau$ a stopping time and $(B^n)$ be a partition in $\mathcal{F}_{\tau}$.
            Suppose that $Y^1_{\tau}1_{B^n}\leq Y^n_\tau1_{B^n}$, for all $n=1, 2, \ldots$, then it follows that
            \begin{align*}
                Y^0 & := Y^1 1_{[0, \tau)}+\sum Y^n 1_{[\tau, T]} 1_{B^n}\\
                Z^0 & := Z^1 1_{[0, \tau]}+\sum Z^n 1_{(\tau, T]} 1_{B^n}
            \end{align*}
            is such that $(Y^0, Z^0)$ is in $\mathcal{A}^0$.
            It is clear that $(Y^0,Z^0)$ satisfies the sub-solution system \eqref{eq:03}.
            Let us show that it is admissible in the sense of $\mathcal{A}^0$.
            For $(b,c)$ in $\mathcal{P}^g$ we denote $\check{Y}^n = M^{bc}Y^n+\int_{}^{} M^{bc}g^\ast(b,c)dt$, and $\check{Z}^n = M^{bc}(Z^n-Y^nc)$ for every $n=0, 1, \ldots$.
            From the previous computations, we have that $\check{Y}^n_t \leq H$ for every $n=1, 2, \ldots$ where
            \begin{equation*}
                H =\sup_{0\leq t \leq T} \left| E\left[ M^{bc} F \big| \mathcal{F}_t\right]\right|+\sup_{0\leq t\leq T}\left| E\left[\int_{0}^{T}M^{bc}g^\ast(b,c)dt \big | \mathcal{F}_t\right]\right| \in L^1
            \end{equation*}
            We deduce from \eqref{eq:10} that
            \begin{equation*}
                \int_{0}^{t} \check{Z}^0 \cdot dW \leq \check{Y}^0_t-\check{Y}^0_0 \leq H -Y^1_0 \in L^1
            \end{equation*}
            and therefore $\int_{}^{} \check{Z}^0 dW$ is a sub-martingale.
        \item The same argumentation can be done when the sequence $(Y^n,Z^n)$ is taken in $\mathcal{A}^0$ since we only look at the sub-martingale property.
            Hence, $\mathcal{A}^0$ is stable under upward pasting.
        \item Following the construction in \citep[Step 2 and 8 of Theorem 4.1]{heyne2013}, we get a sequence $(Y^n,Z^n)$ of elements in $\mathcal{A}^0$ such that $Y^n \uparrow Y$ the \cadlag version of the time wise essential supremum along dyadic partition of $\mathcal{A}^0$.
            Furthermore, fixing $(b_0,c_0)$ in $\mathcal{P}^g$, we can follow \citep[Step 3 - 7 of Theorem 4.1]{heyne2013} to construct a subsequence in the asymptotic convex hull such that $Z^n \to Z$ $P\otimes dt$-almost surely.
            Following \citep[Step 8 and first part of Step 9 of Theorem 4.1]{heyne2013}, we can verify that $(Y,Z)$ is a sub-solution for the system \eqref{eq:03}.
            We are left to verify the admissibility condition for $Z$.
        \item Note that in the construction of the approximating sequence $(Y^n,Z^n)$ where $Y^n\uparrow Y$, since $\mathcal{A}(F)$ is non empty, we can assume that $(Y^1,Z^1)$ is in $\mathcal{A}(F)$ and it holds $Y^1\leq Y$.
            Let $(b,c)$ in $\mathcal{P}^g$ and denote by $\check{Y} = M^{bc}Y+\int_{}^{} M^{bc}g^\ast(b,c)dt$, and $\check{Z} = M^{bc}(Z-Yc)$ as well as $\check{Y}^n = M^{bc}Y^n+\int_{}^{} M^{bc}g^\ast(b,c)dt$, and $\check{Z}^n = M^{bc}(Z^n-Y^nc)$ for all $n=1, 2, \ldots$.
            On the one hand, since $\check{g}$ as defined in \eqref{eq:modifieddriver} in the first step is positive, $\int_{}^{} \check{Z}^n dW$ is a sub-martingale, $E[M^{bc} F|\mathcal{F}_\cdot]$ as well as $E[ \int_{0}^{T}M^{bc}g^\ast(b,c)dt | \mathcal{F}_{\cdot} ]$ are in $\mathcal{H}^1$, we have
            \begin{align*}
                \check{Y}^n_t & \leq E\left[ M^{bc}_T F+\int_{0}^{T}M^{bc} g^\ast(b,c)dt -\int_{t}^{T}\check{g}(\check{Y}^n, \check{Z}^n) du -\int_{t}^{T}\check{Z}^n \cdot dW\big | \mathcal{F}_t\right] \\
                          & \leq \sup_{0\leq t \leq T}\left|E\left[ M^{bc} F \big| \mathcal{F}_t\right]\right|+\sup_{0\leq t\leq T}\left|E\left[ \int_{0}^{T}M^{bc}g^\ast(b,c)dt \big | \mathcal{F}_t\right]\right| \in L^1
            \end{align*}
            Therefore, according to \eqref{eq:10}, it holds
            \begin{multline*}
                \int_{0}^{t} \check{Z} \cdot dW  \leq \check{Y}_t - \check{Y}_0
                                         = \sup_{n} \check{Y}^n_t - \check{Y}_0\\
                                         \leq \sup_{0\leq t \leq T}\left|E\left[ M^{bc} F \big| \mathcal{F}_t\right]\right|+\sup_{0\leq t\leq T}\left|E\left[ \int_{0}^{T}M^{bc}g^\ast(b,c)dt \big | \mathcal{F}_t\right]\right|- \check{Y}_0 =: H_1\in L^1
            \end{multline*}
            In particular $\int_{}^{} \check{Z}\cdot dW$ is a sub-martingale and therefore $(Y, Z)$ is in $\mathcal{A}^0$.
            On the other hand, following \eqref{eq:10}, defining
            \begin{equation*}
                \check{A}_t=K_t+\int_{0}^{t}\check{g}(\check{Y}, \check{Z}) ds = \check{Y}_t-\check{Y}_0-\int_{0}^{t}\check{Z} \cdot dW
            \end{equation*}
            It follows that $\check{A}$ is an increasing \cadlag process starting at $0$.
            Using the sub-martingale property of $\int_{}^{} \check{Z}\cdot dW$, we get that
            \begin{equation*}
                E[\check{A}_T ] \leq E\left[ \check{Y}_T \right]-\check{Y}_0 = E\left[ M^{bc}_T F+\int_{0}^{T}M^{bc} g^\ast(b,c)dt \right]-\check{Y}_0<\infty
            \end{equation*}
            Hence $\check{A}_T$ is in $L^1$.
            It follows that
            \begin{equation*}
                \int_{0}^{t}\check{Z} \cdot dW \geq \check{Y}_t -\check{Y}_0 - \check{A}_T \geq -\sup_{0\leq t\leq T}\left| \check{Y}^1_t \right| -\check{Y}_0-\check{A}_T =:H_2
            \end{equation*}
            Since $(Y^1, Z^1)$ is in $\mathcal{A}$, according to the first step we have that $\sup_{0\leq t\leq T}|\check{Y}_t^1|$ is in $L^1$.
            Hence $H_2$ is in $L^1$.
            From both inequality we deduce that
            \begin{equation*}
                \sup_{0\leq t\leq T}\left| \int_{0}^{t}\check{Z} \cdot dW  \right|\leq H_1- H_2 \in L^1
            \end{equation*}
    \end{enumerate}
\end{proof}
\begin{remark}\label{remark_admin}
    Note that from this proof, if $(Y,Z)$ is a sub-solution \eqref{eq:03}, then for every $(b,c)$ in $\mathcal{P}^g$, $\int_{}^{} M^{bc}( Z-Yc)\cdot dW$ is in $\mathcal{H}^1$ if and only if $\sup_{0\leq t\leq T}|M^{bc}_tY_t+\int_0^tM^{bc}g^\ast(b,c)ds|$ is in $L^1$.
    Moreover, if $g^\ast(b,c)\geq 0$ --- which is the case whenever $g(0,0)=0$ --- since $E[\int_0^TM^{bc}g^\ast(b,c)ds|\mathcal{F}_{\cdot}]$ is in $\mathcal{H}^1$ and
    \begin{multline*}
        \sup_{0\leq t\leq T}\left|M^{bc}_t Y_t+\int_0^tM^{bc}g^\ast(b,c)ds\right|\leq \sup_{0\leq t\leq T}\left|M^{bc}_tY_t\right|+\sup_{0\leq t\leq T}\left|\int_0^tM^{bc}g^\ast(b,c)ds\right|\\
        =\sup_{0\leq t\leq T}\left|M^{bc}_tY_t\right|+\int_0^TM^{bc}g^\ast(b,c)ds,
    \end{multline*}
    it follows that $\sup_{0\leq t\leq T}|M^{bc}_tY_t|$ is in $L^1$ implies $\sup_{0\leq t\leq T}|M^{bc}_tY_t+\int_0^tM^{bc}g^\ast(b,c)ds|$ is in $L^1$.
\end{remark}

\end{appendix}

\bibliographystyle{abbrvnat}
\bibliography{biblio}

\begin{thebibliography}{40}
\providecommand{\natexlab}[1]{#1}
\providecommand{\url}[1]{\texttt{#1}}
\expandafter\ifx\csname urlstyle\endcsname\relax
  \providecommand{\doi}[1]{doi: #1}\else
  \providecommand{\doi}{doi: \begingroup \urlstyle{rm}\Url}\fi

\bibitem[Bordigoni et~al.(2007)Bordigoni, Matoussi, and
  Schweizer]{schweizer2007}
G.~Bordigoni, A.~Matoussi, and M.~Schweizer.
\newblock \emph{Stochastic Analysis and Applications, The {A}bel Symposium
  2005}, chapter A Stochastic Control Approach to a Robust Utility Maximization
  Problem, pages 125--151.
\newblock Springer Berlin Heidelberg, 2007.

\bibitem[Briand and Elie(2013)]{Briand2013}
P.~Briand and R.~Elie.
\newblock A simple constructive approach to quadratic {BSDEs} with or without
  delay.
\newblock \emph{Stochastic Processes and their Applications}, 123\penalty0
  (8):\penalty0 2921--2939, 2013.

\bibitem[Cerreia-Vioglio et~al.(2011)Cerreia-Vioglio, Maccheroni, Marinacci,
  and Montrucchio]{marinacci2011}
S.~Cerreia-Vioglio, F.~Maccheroni, M.~Marinacci, and L.~Montrucchio.
\newblock Uncertainty averse preferences.
\newblock \emph{Journal of Economic Theory}, 146\penalty0 (4):\penalty0
  1275--1330, 2011.

\bibitem[Chen and Epstein(2002)]{epstein2002}
Z.~Chen and L.~G. Epstein.
\newblock Ambiguity, risk, and asset returns in continuous time.
\newblock \emph{Econometrica}, 70\penalty0 (4):\penalty0 1403--1443, 2002.

\bibitem[Cheridito and Nam(2014)]{Cheridito2014}
P.~Cheridito and K.~Nam.
\newblock Bsdes with terminal conditions that have bounded malliavin
  derivative.
\newblock \emph{Journal of Functional Analysis}, 266:\penalty0 1257--1285,
  2014.

\bibitem[Cvitani\'{c} et~al.(2001)Cvitani\'{c}, Schachermayer, and
  Wang]{cvitanic2001}
J.~Cvitani\'{c}, W.~Schachermayer, and H.~Wang.
\newblock Utility maximization in incomplete market with random endowment.
\newblock \emph{Finance and Stochastics}, 5:\penalty0 259--272, 2001.

\bibitem[Drapeau and Kupper(2013)]{drapeau2013}
S.~Drapeau and M.~Kupper.
\newblock Risk preferences and their robust representation.
\newblock \emph{Mathematics of Operations Research}, 28\penalty0 (1):\penalty0
  28--62, 2013.

\bibitem[Drapeau and Mainberger(2016)]{mainberger2016}
S.~Drapeau and C.~Mainberger.
\newblock Stability and markov property of forward backward minimal
  supersolutions.
\newblock \emph{Electronic Journal of Probability}, 21\penalty0 (41):\penalty0
  1--15, 2016.

\bibitem[Drapeau et~al.(2013)Drapeau, Heyne, and Kupper]{heyne2013}
S.~Drapeau, G.~Heyne, and M.~Kupper.
\newblock Minimal supersolutions of convex {BSDE}s.
\newblock \emph{Annals of Probability}, 41\penalty0 (6):\penalty0 3697--4427,
  2013.

\bibitem[Drapeau et~al.(2016)Drapeau, Kupper, Gianin, and
  Tangpi]{DrapeauTangpi}
S.~Drapeau, M.~Kupper, E.~R. Gianin, and L.~Tangpi.
\newblock Dual representation of minimal supersolutions of convex {BSDE}s.
\newblock \emph{Annales de l'Institut {H}enri {P}oincar\'e (B) Probabilit\'es
  et Statistiques}, 52\penalty0 (2):\penalty0 868--887, 2016.

\bibitem[Duffie and Epstein(1992)]{duffie1992}
D.~Duffie and L.~G. Epstein.
\newblock Stochastic differential utility.
\newblock \emph{Econometrica}, 60\penalty0 (2), 1992.

\bibitem[El~Karoui and Ravanelli(2009)]{ravanelli2009}
N.~El~Karoui and C.~Ravanelli.
\newblock Cash sub-additive risk measures and interest rate ambiguity.
\newblock \emph{Mathematical Finance}, 19:\penalty0 561--590, 2009.

\bibitem[El~Karoui et~al.(1997)El~Karoui, Peng, and Quenez]{karoui1997}
N.~El~Karoui, S.~Peng, and M.~C. Quenez.
\newblock Backward stochastic differential equations in finance.
\newblock \emph{Mathematical Finance}, 1\penalty0 (1):\penalty0 1--71, January
  1997.

\bibitem[El~Karoui et~al.(2001)El~Karoui, Peng, and Quenez]{quenez2001}
N.~El~Karoui, S.~Peng, and M.-C. Quenez.
\newblock A dynamic maximum principle for the optimization of recursive
  utilities under constraints.
\newblock \emph{The Annals of Applied Probability}, 11\penalty0 (3):\penalty0
  664--693, 2001.

\bibitem[Epstein and Zin(1989)]{epstein1989}
L.~G. Epstein and S.~E. Zin.
\newblock Substitution, risk aversion, and the temporal behavior of consumption
  and asset returns: A theoretical framework.
\newblock \emph{Econometrica}, 57\penalty0 (4):\penalty0 937--69, July 1989.

\bibitem[F\"{o}llmer and Schied(2002)]{foellmer2002}
H.~F\"{o}llmer and A.~Schied.
\newblock Convex measures of risk and trading constraint.
\newblock \emph{Finance and Stochastics}, 6\penalty0 (4):\penalty0 429--447,
  2002.

\bibitem[Gianin(2002)]{gianinphd2002}
E.~R. Gianin.
\newblock \emph{Convexity and law invariance of risk measures}.
\newblock PhD thesis, University di Bergamo, Italy, 2002.

\bibitem[Gianin(2006)]{gianin2007}
E.~R. Gianin.
\newblock Risk measures via $g$-expectations.
\newblock \emph{Insurance: Mathematics and Economics}, 39\penalty0
  (1):\penalty0 19--34, 2006.

\bibitem[Heyne et~al.(2014)Heyne, Kupper, and Mainberger]{heyne2014}
G.~Heyne, M.~Kupper, and C.~Mainberger.
\newblock Minimal supersolutions of {BSDE}s with lower semicontinuous
  generators.
\newblock \emph{Annales de l'Institut Henri Poincar\'e (B) Probabilit\'es et
  Statistiques}, 50\penalty0 (2):\penalty0 524--538, 2014.

\bibitem[Heyne et~al.(2016{\natexlab{a}})Heyne, Kupper, Mainberger, and
  Tangpi]{tangpi2014}
G.~Heyne, M.~Kupper, C.~Mainberger, and L.~Tangpi.
\newblock Minimal supersolutions of convex {BSDE}s under constraints.
\newblock \emph{ESAIM Probability and Statistics}, 20:\penalty0 178--195,
  2016{\natexlab{a}}.

\bibitem[Heyne et~al.(2016{\natexlab{b}})Heyne, Kupper, and Tangpi]{tangpi2016}
G.~Heyne, M.~Kupper, and L.~Tangpi.
\newblock Portfolio optimization under nonlinear utility.
\newblock \emph{International Journal of Theoretical and Applied Finance},
  19\penalty0 (1650029):\penalty0 37 Pages, 2016{\natexlab{b}}.

\bibitem[Horst et~al.(2014)Horst, Hu, Imkeller, R\'{e}veillac, and
  Zhang]{horst2014}
U.~Horst, Y.~Hu, P.~Imkeller, A.~R\'{e}veillac, and J.~Zhang.
\newblock Forward backward systems for expected utility maximization.
\newblock \emph{Stochastic Processes and their Applications}, 124\penalty0
  (5):\penalty0 1813--1848, 2014.

\bibitem[Hu et~al.(2005)Hu, Imkeller, and M{\"u}ller]{imkeller2005}
Y.~Hu, P.~Imkeller, and M.~M{\"u}ller.
\newblock Utility maximization in incomplete markets.
\newblock \emph{The Annals of Applied Probability}, 15\penalty0 (3):\penalty0
  1691--1712, 2005.

\bibitem[Karatzas et~al.(1987)Karatzas, Lehoczky, and Shreve]{karatzas1987}
I.~Karatzas, J.~P. Lehoczky, and S.~E. Shreve.
\newblock Optimal portfolio and consumption decisions for a small investor on a
  finite horizon.
\newblock \emph{SIAM Journal on Control and Optimization}, 25\penalty0
  (6):\penalty0 1557--1586, 1987.

\bibitem[Kazamaki(1994)]{Kazamaki01}
N.~Kazamaki.
\newblock \emph{Continuous Exponential Martingales and {BMO}}, volume 1579 of
  \emph{Lecture Notes in Mathematics}.
\newblock Springer-Verlag, Berlin, 1994.
\newblock ISBN 3-540-58042-5.

\bibitem[Kobylanski(2000)]{kobylanski2000}
M.~Kobylanski.
\newblock Backward stochastic differential equations and partial differential
  equations with quadratic growth.
\newblock \emph{The Annals of Probability}, 28\penalty0 (2):\penalty0 558--602,
  04 2000.

\bibitem[Kramkov and Schachermayer(1999)]{kramkov1999}
D.~Kramkov and W.~Schachermayer.
\newblock The asymptotic elasticity of utility functions and optimal investment
  in incomplete markets.
\newblock \emph{Ann. Appl. Probab.}, 9\penalty0 (3):\penalty0 904--950, 08
  1999.

\bibitem[Kreps and Porteus(1978)]{kreps1978}
D.~M. Kreps and E.~L. Porteus.
\newblock Temporal resolution of uncertainty and dynamic choice theory.
\newblock \emph{Econometrica}, 46\penalty0 (1):\penalty0 185--200, January
  1978.

\bibitem[Matoussi et~al.(2015)Matoussi, Possama\"{i}, and Zhou]{matoussi2015}
A.~Matoussi, D.~Possama\"{i}, and C.~Zhou.
\newblock Robust utility maximization in non-dominated models with 2{BSDE}s.
\newblock \emph{Mathematical Finance}, 25\penalty0 (2):\penalty0 258--287,
  2015.

\bibitem[Pardoux and Peng(1990)]{peng1990}
E.~Pardoux and S.~Peng.
\newblock Adapted solution of a backward stochastic differential equation.
\newblock \emph{System Control Letters}, 14\penalty0 (1):\penalty0 55--61,
  1990.

\bibitem[Peng(1997)]{peng1997}
S.~Peng.
\newblock \emph{Backward Stochastic Differential Equation}, chapter Backward
  sde and related g-expectation, pages 141--159.
\newblock Number 364. Pitman research notes in mathematics series, 1997.

\bibitem[Peng(2004)]{peng2004}
S.~Peng.
\newblock \emph{Stochastic Methods in Finance: Lectures given at the
  {C.I.M.E.-E.M.S.} Summer School held in {B}ressanone/{B}rixen, {I}taly, July
  6--12, 2003}, chapter Nonlinear Expectations, Nonlinear Evaluations and Risk
  Measures, pages 165--253.
\newblock Lecture Notes in Mathematics. Springer Berlin Heidelberg, Berlin,
  Heidelberg, 2004.
\newblock ISBN 978-3-540-44644-6.

\bibitem[Quenez and Lazrak(2003)]{quenez2003}
M.~C. Quenez and A.~Lazrak.
\newblock A generalized stochastic differential utility.
\newblock \emph{Mathematics of Operations Research}, 28\penalty0 (1):\penalty0
  154--180, February 2003.

\bibitem[Rockafellar(1970)]{rockafellar1970}
R.~T. Rockafellar.
\newblock \emph{{Convex Analysis}}.
\newblock Princeton Mathematical Series, No. 28. Princeton University Press,
  Princeton, N.J., 1970.

\bibitem[Rockafellar and Wets(1998)]{rockafellar1998}
R.~T. Rockafellar and R.~J.-B. Wets.
\newblock \emph{Variational Analysis}.
\newblock Springer, Berlin, New York, 1998.

\bibitem[Santacroce and Trivellato(2014)]{santacroce2014}
M.~Santacroce and B.~Trivellato.
\newblock Forward backward semimartingale systems for utility maximization.
\newblock \emph{SIAM Journal on Control and Optimization}, 2014.

\bibitem[Savage(1972)]{savage1972}
L.~J. Savage.
\newblock \emph{The foundations of Statistics}.
\newblock Dover Publications, 2 revised edition, 1972.

\bibitem[Schachermayer(2001)]{walter2001}
W.~Schachermayer.
\newblock Optimal investment in incomplete markets when wealth may become
  negative.
\newblock \emph{The Annals of Applied Probability}, 11\penalty0 (3):\penalty0
  694--734, 2001.

\bibitem[Schied(2007)]{schied2007}
A.~Schied.
\newblock Optimal investments for risk- and ambiguity-averse preferences: a
  duality approach.
\newblock \emph{Finance and Stochastics}, 11\penalty0 (1):\penalty0 107--129,
  2007.

\bibitem[von Neumann and Morgenstern(1947)]{neumann1947}
J.~von Neumann and O.~Morgenstern.
\newblock \emph{Theory of Games and Economics Behavior}.
\newblock Princeton University Press, 2nd edition, 1947.

\end{thebibliography}

\end{document}